\def\ps@headings{%
	\def\@oddhead{\mbox{}\scriptsize\rightmark \hfil \thepage}%
	\def\@evenhead{\scriptsize\thepage \hfil \leftmark\mbox{}}%
	\def\@oddfoot{}%
	\def\@evenfoot{}}
\makeatother \pagestyle{headings}
\newtheorem{theorem}{Theorem}
\newtheorem{corollary}{Corollary}
\begin{document}
	
	\title{IRS Aided Millimeter-Wave Sensing and Communication: Beam Scanning, Beam Splitting, and Performance Analysis}

	\author{\authorblockN{Renwang Li, {Xiaodan Shao,~\IEEEmembership{Member,~IEEE},  Shu Sun, \IEEEmembership{Member,~IEEE}, Meixia Tao, \IEEEmembership{Fellow,~IEEE}, Rui Zhang, \IEEEmembership{Fellow,~IEEE}}\\
			\thanks{Part of this work was presented at the 2023 IEEE 24th International Workshop on Signal Processing Advances in Wireless Communications (SPAWC) \cite{renwang}.}
			\thanks{R. Li, S. Sun, and M. Tao are with  the Department of Electronic Engineering and the 		Cooperative Medianet Innovation Center (CMIC), Shanghai Jiao Tong University, Shanghai 200240, China (e-mails:\{renwanglee, shusun, mxtao\}@sjtu.edu.cn).}
			\thanks{X. Shao is with the Institute for Digital Communications, Friedrich-Alexander-University Erlangen-Nuremberg, 91054
				Erlangen, Germany (e-mail: xiaodan.shao@fau.de).} 
			\thanks{ R. Zhang is with School of Science and Engineering, Shenzhen Research Institute of Big Data, The Chinese University of Hong Kong, Shenzhen, Guangdong 518172, China (e-mail: rzhang@cuhk.edu.cn). He is also with the Department of Electrical and Computer Engineering, National University of Singapore, Singapore 117583 (e-mail: elezhang@nus.edu.sg).
			}
		}
	}
	
	\maketitle
	
	\begin{abstract}
		Integrated sensing and communication (ISAC)   has attracted growing 
		interests for enabling the future 6G wireless networks, due to its capability of sharing spectrum and hardware resources between communication and sensing systems. However, existing works on ISAC usually need to modify the communication protocol to cater for the new sensing performance requirement, which may be difficult to implement in practice.   In this paper, we study a new  intelligent reflecting surface (IRS) aided millimeter-wave (mmWave) ISAC system by exploiting the distinct beam scanning operation in mmWave communications to achieve efficient sensing at the same time.  First, we propose a two-phase ISAC  protocol aided by a semi-passive IRS, consisting of beam scanning and data transmission. Specifically, in the beam scanning phase, the IRS finds the optimal beam for reflecting signals from the base station to a communication user via its passive elements. Meanwhile, the IRS directly estimates the angle of a nearby target based on echo signals from the target using its equipped active sensing element. Then, in the data transmission phase, the sensing accuracy is further improved by leveraging the data signals via possible  IRS beam splitting.  Next, we derive the achievable rate of the communication user as well as the Cram\'er-Rao bound and the approximate mean square error of the target angle estimation  Finally, extensive simulation results are provided to verify our analysis as well as the effectiveness of the proposed scheme.
	\end{abstract}
	\begin{IEEEkeywords}
		Integrated sensing and communication (ISAC), intelligent reflecting surface (IRS),  millimeter wave (mmWave), beam scanning, beam splitting, target sensing, Cram\'er-Rao bound.
	\end{IEEEkeywords}
	
	\section{Introduction}
	Recently, integrated sensing and communication (ISAC) has been recognized as a key technology for the future 6G wireless network due to its potential to enable efficient sharing of spectrum and hardware resources between communication and sensing systems \cite{ 9705498,9737357}. Meanwhile,   millimeter-wave (mmWave) technology can provide high data rate for communication  as well as high resolution for sensing, making it promising for realizing ISAC systems. However, mmWave signals are susceptible to   obstacles, and the performance of  mmWave ISAC systems can degrade dramatically in the absence of line-of-sight (LoS) path. To overcome this issue, intelligent reflecting surface (IRS)  has been recognized as a practically viable solution \cite{8910627, 9326394,IRS2}.  IRS is generally a digitally-controlled metasurface composed of a large number {of} passive reflecting elements (REs) that can reflect the incident signal with independently controlled phase shifts. By leveraging  IRS, a virtual LoS link can be created between two wireless nodes when their direct link is obstructed, thus allowing for uninterrupted  sensing and communication.
	
	Motivated by the above, significant research  efforts have been devoted to studying IRS-aided ISAC systems  \cite{9364358, 9771801, 9416177,song2022intelligent, 9769997, 9729741, 10038557,9593143}.  In \cite{9364358}, the joint design of transmit beamforming at the base station (BS) and reflection coefficients at the IRS is studied to maximize the signal-to-noise ratio (SNR) of radar detection while meeting communication requirements simultaneously.  The works \cite{9771801} and \cite{9416177} address radar beampattern design problems in single-user and multi-user scenarios, respectively.   The Cram\'er-Rao bound (CRB) minimization for IRS-aided sensing is considered in \cite{song2022intelligent}. The authors in \cite{9769997} aim to maximize the radar output signal-to-interference-plus-noise ratio (SINR) while guaranteeing the communication quality. A double-IRS-aided communication radar coexistence system is considered in \cite{9729741}. In \cite{10038557}, a feedback-based beam training approach is proposed to design BS transmit beamforming and IRS reflection coefficients for simultaneous communication and sensing. The authors in \cite{9593143} propose a multi-stage hierarchical beam training codebook to achieve the desired  accuracy for IRS-aided localization while ensuring a reliable communication link with the user. Notice that all of the aforementioned works adopt passive IRS to assist sensing, and thus their performance  is hindered by the severe path loss of the BS-IRS-target-IRS-BS cascaded echo link, particularly in mmWave frequencies. 
	
	{It is worth mentioning that  existing studies on IRS-aided ISAC \cite{9364358, 9771801, 9416177,song2022intelligent, 9769997, 9729741} have mostly assumed that the target angle from the IRS is  known a prior within a certain range and mainly focused on the data transmission period; however, there has been very limited consideration of exploiting the channel estimation/training period for achieving ISAC \cite{9838563}. Moreover, the aforementioned works usually require protocol modifications to accommodate the new sensing performance requirements, a task that might pose challenges in practical implementation.}  In practice, mmWave communication systems typically adopt the transmission protocol with two phases, namely,  the beam scanning/training phase and the data transmission phase \cite{8458146}. In the beam scanning phase, the BS transmits reference/pilot signals using different beams from a given codebook, while the user measures the received signal power for each beam and feeds back the index of the beam with the maximum power to the BS. Subsequently, the BS adopts this maximum-power beam to transmit data during the data transmission phase. The above beam scanning protocol can be extended to work for IRS-aided mmWave communication systems, by applying firstly BS beam scanning to find the maximum-power beam towards the IRS, and then IRS beam scanning to find that towards the user, for the typical scenario where the LoS channel between the BS and user is severely blocked.   However, the exploration of the above protocol for target sensing  as well as the performance tradeoff between sensing and communication under this protocol remain unaddressed yet.
	
	As such, in this paper, we investigate a downlink IRS-aided mmWave ISAC system, as illustrated in Fig. \ref{fig_system}, where a ``semi-passive"  IRS consisting of passive REs and active sensing elements (SEs)  is adopted to create virtual LoS channels for the IRS to forward information from the BS to a nearby communication user as well as detect the angle of a nearby target. In particular, the  SEs are used to collect the echo signals reflected from the target for its angle estimation. Compared with a fully passive IRS that reflects the echoes from the target back to the BS for detection, the semi-passive IRS can directly estimate the angle of the target and thus  significantly reduce the path loss of the received echo signal at the BS \cite{9724202,Semi-IRS1,Semi-IRS2}. The main contributions of this paper are summarized as follows:
	\begin{itemize}
		\item First, we propose a two-phase ISAC protocol for the considered IRS-aided mmWave ISAC system, based on the practical two-phase communication protocol for mmWave systems. Specifically, in the beam scanning phase, the training signals from the BS are  used not only to identify the best IRS beam for the communication user, but also to initially estimate the target's angle from the IRS. Then, in the subsequent data transmission phase, the data signals from the BS are also used to improve the angle estimation   accuracy via possible  IRS beam splitting, while ensuring the achievable rate of the communication user.
		\item Second, we analyze the achievable rate of the 	communication user and derive the CRB of angle estimation for the sensing target in the beam scanning phase. The CRB analysis reveals that more REs and  SEs can achieve more accurate target angle estimation.  Additionally, we derive an analytical approximation of the mean square error (MSE) of the angle estimation, which leads to a closed-form expression for the minimum SNR  required to achieve a desired initial angle estimation accuracy.
		\item Third, in order to the enhance sensing accuracy in the data transmission phase, we propose two IRS beam design and sensing strategies, i.e., single-beam-based sensing and beam-splitting-based sensing, which are applied when the difference between the estimated angles of the communication user and the target from the IRS in the beam scanning phase is smaller than a given threshold and otherwise, respectively.    In the latter case, the loss in the achievable rate of the communication user is also characterized to ensure its performance.
	\end{itemize}	
	\begin{figure}[t]
	\begin{centering}
		\includegraphics[width=.42\textwidth]{./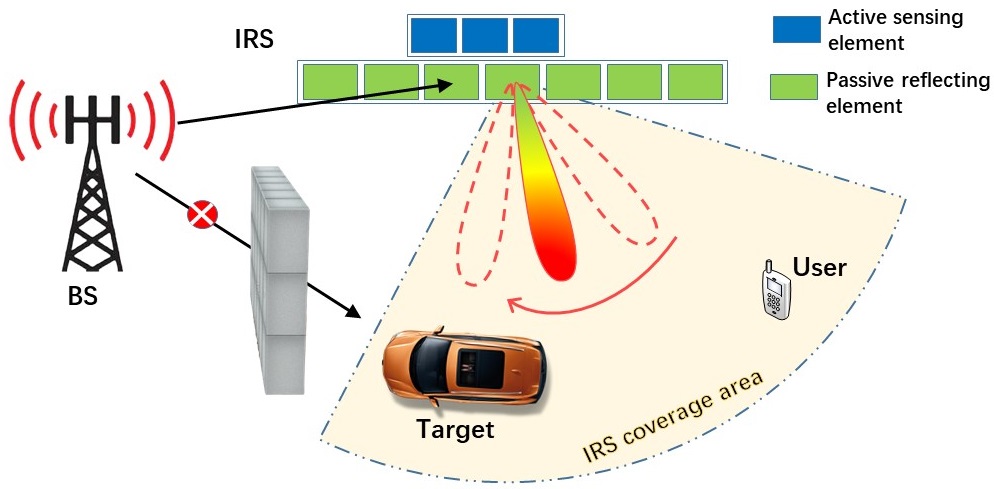}
		\caption{System model of IRS-aided ISAC.}\label{fig_system}
	\end{centering}
\vspace{-0.5cm}
\end{figure}	

	The rest of this paper is organized as follows. Section II introduces the IRS-aided mmWave system and presents the proposed ISAC protocol. Section III analyzes the communication and sensing performance in the beam scanning phase. Section IV proposes enhanced sensing strategies during the data transmission phase, with the resulted communication/sensing performance characterized. The simulation results are provided in Section V, and the paper is concluded in Section VI.
	
	\emph{Notations}: The imaginary unit is denoted by $j=\sqrt{-1}$. Vectors and matrices are denoted by bold-face lower-case and upper-case letters, respectively. $\mathbb{C}^{x\times y}$ denotes the space of $x\times y$ complex-valued matrices. $ \bf x^*$, $\mathbf{x}^T$, and $\mathbf x^H$ denote the conjugate, transpose and  conjugate transpose of vector $\bf x$. $\bf I$ denotes an identity matrix of appropriate dimensions.  $\operatorname{diag}(\mathbf{x})$ denotes a diagonal matrix with each diagonal element being the corresponding element in $\mathbf{x}$.   $\dot{\mathbf{a}}(\theta)$ denotes the gradient vector of ${\mathbf{a}}(\theta)$. $\operatorname{vec}(\cdot)$ denotes the vectorization operator. $\lfloor x \rfloor$ denotes the flooring operation that takes the largest integer no greater than $x$. The distribution of a circularly symmetric complex Gaussian (CSCG)  random vector with zero means and covariance matrix $\Sigma$ is denoted by $\mathcal{C}\mathcal{N}(\boldsymbol{0},\Sigma)$; and $\sim$ stands for ``distributed as''. The main notations used in this paper are summarized in Table \ref{table_notation}.

\begin{table*}[h!] 
	\begin{center}
		\caption{Summary of Notations.}
		\label{table_notation}
		\begin{tabular}{|c|p{4cm}||c|p{4cm}|}
			\hline
			\textbf{Notation} & \textbf{Description} &\textbf{Notation} & \textbf{Description} \\
			\hline
			$N$ & Antenna number of BS & $M$ & Number of IRS REs \\
			\hline
			$M_s$ & Number of IRS SEs & $M_e$ & Number of IRS REs allocated for target estimation in Phase II \\
			\hline
			$L$ & Codebook size & $\tau$ & Time duration of beam scanning \\
			\hline
			$T$ & Channel coherence time normalized to number of symbol durations & $\theta_{BI}$ & Spatial AoA from BS to IRS\\
			\hline 
			$\theta_{IU}$ & Spatial AoD from IRS to the target & $\theta_{IT}$ & Spatial AoD from IRS to the user \\
			\hline
			$\alpha_g$ & Path gain of BS-REs channel& $\alpha_h$ & Path gain of IRS-user channel \\
			\hline
			$\alpha_s$ & Path gain of REs-target-SEs link & $\ell$ & IRS's best beam index for the user\\
			\hline 
			$\mathbf{a}_r(\cdot)\in \mathbb{C}^{M\times 1}$ & Array response vector of REs  & $\mathbf{a}_s(\cdot)\in \mathbb{C}^{M_s\times 1}$ & Array response vector of SEs \\
			\hline
			$\mathbf{q}(\cdot)\in \mathbb{C}^{M\times 1}$ & $\mathbf{q}(\theta_{IT}) \triangleq \mathbf{a}_r (\overline{\theta}_{IT})$ & $\boldsymbol{\phi} \in \mathbb{C}^{M\times 1}$ & Reflection vector of the REs \\
			\hline
		\end{tabular}
	\end{center}
\vspace{-0.5cm}
\end{table*}

	\section{System Model and Protocol Design}		
	In this section, we introduce the IRS-aided ISAC system model and the corresponding channel model, and then propose the ISAC protocol.
	\subsection{System Model}
	We consider a downlink mmWave ISAC system with the aid of a semi-passive IRS as illustrated in Fig. \ref{fig_system}, where an $N$-antenna BS aims to communicate with a single-antenna user  and also   to detect the angle of a sensing target. The direct links between the BS and the user, as well as the target, are assumed to be blocked due to unfavorable propagation environment. Thus, the IRS is deployed to create virtual links for both communication and sensing. We consider the use of semi-passive IRS consisting of $M$ passive REs to reflect the transmitted signals from the BS to the user and target, and $M_s$ active  SEs to collect the echo signals from the target for its angle estimation. The complex-valued baseband transmitted signal at the BS can be expressed as $\mathbf{x}=\mathbf{w}s$, where  $s$ denotes the training/data symbol for the communication user with unit power and $\mathbf{w} \in \mathbb{C}^{N\times 1}$ is the transmit beamforming vector with $\|\mathbf{w}\|^2=1$. Then, the received signal $y_u$ at the user can be expressed as
	\begin{equation} \label{exp_user}
		y_u = \sqrt{P_t}\mathbf{h}_u^H \operatorname{diag} (\boldsymbol{\phi}) \mathbf{G} \mathbf{w} s + n_u, \\
	\end{equation}
	where $P_t$ is the transmit power at the BS, $\mathbf{G}\in \mathbb{C}^{M\times N}$ represents the channel between the BS and   REs, $\mathbf{h}_u\in \mathbb{C}^{M\times 1}$ represents the channel between the   REs and the user, $n_u \sim \mathcal{CN}(0, \sigma^2)$ is the receiver AWGN with $\sigma^2$ representing the noise power, and $\boldsymbol{\phi}\in \mathbb{C}^{M\times 1}$ represents the reflection vector at the   REs, which can be written as
	\begin{equation} 
		\boldsymbol{\phi}=\left[e^{j\phi_1},e^{j\phi_2},\ldots, e^{j\phi_M}  \right]^T,
	\end{equation}
	with $\phi_i$ being the phase shift by the $i$-th RE.
	
	The   SEs can simultaneously receive the signals transmitted  from  the BS and the echo signals reflected by the target\footnote{The radar cross section (RCS) of the communication user (terminal) is usually significantly smaller compared to the target. Hence, the echo signal reflected by the communication user can be safely ignored in the target angle estimation.}. In general, the angles of the target and BS with respect to the IRS are different and can be estimated by the   SEs based on the received echoes. The angle between the BS and IRS can be determined in advance by the   SEs, which facilitates the estimation of the target's angle. The received signal $\mathbf{y}_s \in \mathbb{C}^{M_s\times 1}$ at the   SEs can be represented as
	\begin{equation} \label{exp_sensors}
		\begin{aligned}
			\mathbf{y}_s &= \sqrt{P_t} \left(\mathbf{H}_t \operatorname{diag} (\boldsymbol{\phi}) \mathbf{G}+ \mathbf{G}_s\right) \mathbf{w} s   + \mathbf{n}_s, \\
		\end{aligned}
	\end{equation}
	where $\mathbf{H}_t \in \mathbb{C}^{M_s\times M}$ denotes the channel matrix of the   REs-target-SEs link, $\mathbf{G}_s \in \mathbb{C}^{M_s\times N} $ denotes the channel matrix of the BS-SEs link, and $\mathbf{n}_s \sim \mathcal{CN}(0, \sigma^2 \mathbf{I}_{M_s})$ is the receiver AWGN. 
	
	\subsection{Channel Model}
	We adopt the LoS channel model to characterize the 	mmWave channel. For ease of exposition, we assume that uniform linear arrays (ULAs)  are equipped at the BS, REs, and  SEs. Thus, the BS-REs channel can be expressed as
	\begin{equation} 
		\mathbf{G} = \alpha_g \mathbf{a}_r (\theta_{BI}) \mathbf{a}_b^H (\vartheta_{BI}),
	\end{equation}
	where $\alpha_g=\frac{\lambda}{4\pi d_{BI}} e^{\frac{j 2\pi d_{BI}}{\lambda}}$ \cite{9724202} denotes the complex-valued path gain of the BS-REs channel with $\lambda$ being the carrier wavelength and $d_{BI}$ being the distance between the BS and IRS,  $\vartheta_{BI}=\sin(\varsigma_{BI})$ with $\varsigma_{BI}$ denoting the angle of departure (AoD) from the BS, $\theta_{BI} = \sin(\zeta_{BI})$ with $\zeta_{BI}$ denoting the angle of arrival (AoA) to the IRS, and $\mathbf{a}_r(\cdot)$ $\left(\mathbf{a}_b(\cdot)\right)$ denotes the array response vector associated with the REs (BS). The array response vector for a ULA with $M$ elements of half-wavelength spacing  and the center of the ULA  as the reference point  can be expressed as
	\begin{equation} \label{exp_array_resp}
	{\mathbf{a}}(\theta) = \left[
		e^{-j \frac{(M-1)\pi \theta }{2}},  e^{-j \frac{(M-3)\pi \theta }{2}},   \ldots,  e^{j \frac{(M-1)\pi \theta}{2}}
	\right]^{T}.
\end{equation}
	
	The IRS-user channel $\mathbf{h}_u$ can also be written as
	\begin{equation} 
		\mathbf{h}_u = \alpha_h \mathbf{a}_r (\theta_{IU} ),
	\end{equation}
	where $\alpha_h=\frac{\lambda}{4\pi d_{IU}} e^{\frac{j 2\pi d_{IU}}{\lambda}}$ denotes the complex-valued path gain of the IRS-user channel with $d_{IU}$ being the distance between the IRS and user, and $\theta_{IU}=\sin (\zeta_{IU})$ with $\zeta_{IU}$ denoting the AoD associated with the IRS.	
	
	The BS-SEs channel $\mathbf{G}_s$ can be represented as
	\begin{equation} 
		\mathbf{G}_s = \alpha_g \mathbf{a}_s ( \theta_{BI}) \mathbf{a}_b^H (\vartheta_{BI}),
	\end{equation}	
	where $\mathbf{a}_s(\cdot)$  denotes the array response vector associated with the SEs. 
	The REs-target-SEs channel $\mathbf{H}_t$ can be expressed as
	\begin{equation} 
		\mathbf{H}_t = \alpha_s \mathbf{a}_s ( \theta_{IT}) \mathbf{a}_r^H (\theta_{IT} ),
	\end{equation}
	where  $\theta_{IT}= \sin(\zeta_{IT})$ with $\zeta_{IT}$ denoting AoD from the SEs,  $\alpha_s =\sqrt{\frac{\lambda^2 \kappa}{64\pi^3 d_{IT}^4}} e^{\frac{j 4\pi d_{IT}}{\lambda}}$ refers to the complex path gain of the  REs-target-SEs link \cite{9367457}, in which  $d_{IT}$ denotes the distance between the IRS and target and $\kappa$ denotes the RCS of the target.
	
	Considering that the locations of the BS and IRS are fixed, the BS-REs channel $\mathbf{G}$ and the BS-SEs channel $\mathbf{G}_s$ are assumed to be constant for a long period, and can be estimated beforehand at the BS to achieve the optimal transmit beamforming as $\mathbf{w}=\frac{1}{\sqrt{N}} \mathbf{a}_b (\vartheta_{BI})$. Thus, in this paper we focus on the  beam training at the IRS. As a result, the received signal at the communication user  in \eqref{exp_user}  can be rewritten as
	\begin{equation} \label{exp_user2}
		y_u = \sqrt{N P_t} \alpha_g \mathbf{h}_u^H \operatorname{diag} (\boldsymbol{\phi}) \mathbf{a}_r (\theta_{BI}) s + n_u \\
	\end{equation}
	and the received   signal at the   SEs in \eqref{exp_sensors} can be rewritten as
	\begin{equation} \label{exp_sensors2}
	\hspace{-0.1cm}	\mathbf{y}_s 
		= \sqrt{N P_t } \alpha_g \left( \mathbf{H}_t \operatorname{diag} (\boldsymbol{\phi}) \mathbf{a}_r (\theta_{BI})+ \mathbf{a}_s(\theta_{BI}) \right) s  + \mathbf{n}_s.
	\end{equation}
	Note that  there exists an undetectable region, denoted as $\{\Omega_u | \theta_{IT}: |\theta_{IT}-\theta_{BI}|< \frac{2}{M_s} \}$, associated with the target.  Specifically, the   SEs receive strong   signals from the BS and weak echo signals from the target. When the angles between the target and BS with respect to the IRS are close, the echo signals from the target cannot be effectively extracted from the mixed signals, and thus the target cannot be detected. This fact will be elaborated analytically in Section \ref{sec_mse_phaseI}. Therefore, our focus is on the scenarios where the target is located outside the undetectable region.

	\subsection{Proposed Protocol for ISAC}
	\begin{figure}[t]
		\begin{centering}
			\includegraphics[width=.45\textwidth]{./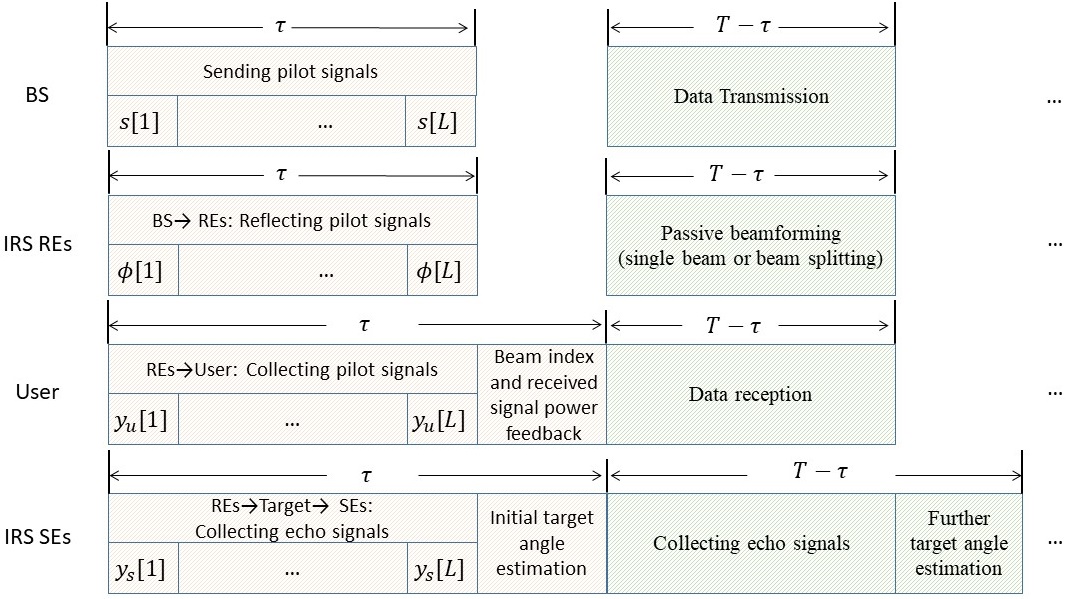}
			\caption{IRS-aided mmWave ISAC protocol.}\label{fig_protocol}
		\end{centering}
	\vspace{-0.5cm}
	\end{figure}
	
	In this subsection, we propose a two-phase protocol for the considered IRS-aided mmWave ISAC system. Following the existing mmWave communication protocol, beam training/scanning needs to be first conducted at the IRS, followed by data transmission. During the beam scanning phase, we adopt the widely-used discrete Fourier transformation (DFT) codebook $\mathbf{D}$ with $L$ beams as  follows,
	\begin{equation} \label{exp_codebook}
		\mathbf{D} \triangleq[\mathbf{a}_r (\eta_1), \mathbf{a}_r(\eta_2), \cdots, \mathbf{a}_r(\eta_L)] \in \mathbb{C}^{M \times L},
	\end{equation}
	where $\eta_i=-1+\frac{2 i-1}{L}, i=1, \cdots, L$, and $L \geq M$. In our system model, the SEs can exploit downlink beam scanning for sensing by collecting the echo signals reflected from the target. In addition, the communication signals during the data transmission phase can also be utilized to refine the target sensing performance (see Section \ref{sec_target_enhancement} for details).  {The ISAC protocol, depicted in Fig. \ref{fig_protocol},  is divided into two phases, with $T$ denoting the channel coherence time normalized to number of symbol durations. The first phase involves beam scanning with $\tau$ symbol durations, while the second phase focuses on data transmission with $T-\tau$ symbol durations. The time of beam index and received signal power feedback as well as target angle estimation is ignored. }
	
	\begin{itemize}
		\item Phase I (beam scanning): The BS sends downlink training signals. The REs sweep the beams in the codebook $\mathbf{D}$, while the SEs collect the echo signals reflected from the target. At the end of IRS beam scanning, the communication user identifies the IRS's best beam and corresponding received power, and feeds them back to the IRS controller (directly or via the BS). Meanwhile, the SEs estimate the target angle based on the received echo signals.
		\item Phase II (data transmission and enhanced sensing): The BS sends downlink data signals. If the target  is detected to be located in the vicinity of the communication user, REs reflect the data signal towards the user with the IRS's best beam found in Phase I, which is also reflected towards the target to enhance the estimation accuracy (thus termed as IRS single-beam). On the other hand, if the user and target are detected to be well separated, IRS beam splitting is adopted where a certain portion of REs are used for target  sensing  and the remaining REs are for communication with their corresponding optimal beam, provided that the achievable rate of the user is ensured within an acceptable margin. At the end of data transmission, the  SEs further estimate the target angle based on the received echo signals.
	\end{itemize}

	\section{Performance Analysis for Initial Beam Scanning}
	\subsection{Achievable Rate of Communication User} \label{sec_rate_phaseI}
	In this subsection, we first analyze the maximum channel gain obtained in the beam scanning phase, and then derive the achievable rate of the communication user.  {Assume the duration of one beam is equal to one symbol duration for simplicity, we have $\tau=L$. The BS's reference signal can be set as $s[t] =1, \forall t=1,2,\ldots,L$.} The received signal at the user in \eqref{exp_user2} can be expressed as
		\begin{align}
			y_u[t] &= \sqrt{N P_t} \alpha_g \mathbf{h}_u^H \operatorname{diag} (\boldsymbol{\phi}[t]) \mathbf{a}_r (\theta_{BI})  + n_u[t] \notag \\
			& = \sqrt{N P_t} \alpha_g \alpha_h \mathbf{a}_r^H (\theta_{IU}) \operatorname{diag} (\boldsymbol{\phi}[t]) \mathbf{a}_r (\theta_{BI})  + n_u[t] \notag\\
			& = \sqrt{N P_t} \alpha_g \alpha_h \boldsymbol{\phi}^T[t] \operatorname{diag} (\mathbf{a}_r^H (\theta_{IU}) ) \mathbf{a}_r (\theta_{BI})  +n_u[t] \notag\\
			& = \sqrt{N P_t} \alpha_g \alpha_h \left(\mathbf{a}_r^H (\overline{\theta}_{IU}) \boldsymbol{\phi}[t] \right) +n_u[t],
		\end{align}
	where $\overline{\theta}_{IU}= \theta_{IU} - \theta_{BI}$, and $\boldsymbol{\phi}[t] \in \mathbf{D}$. We assume that $\ell$ is the best beam index, i.e.,
	\begin{equation} 
		\ell = \arg \max \limits_{t, t=1, \cdots, L} |y_u[t]|^2.
	\end{equation}	
	Let $\delta_u= \left|\overline{\theta}_{IU} - \eta_\ell \right| $ denote  the spatial direction difference between $\overline{\theta}_{IU}$ and its adjacent beam $\eta_\ell$ $\left(0\leq \delta_u \leq \frac{1}{L}\right)$. Then, by denoting the best beam as $\boldsymbol{\phi}^\star=\mathbf{a}_r (\eta_\ell)$, the IRS beamforming gain can be expressed as
	\begin{equation} \label{exp_bs_irs}
		\left|\mathbf{a}_r^H (\overline{\theta}_{IU}) \boldsymbol{\phi}^\star \right| = \left| \sum\limits_{m=1}^M e^{j\pi \delta_u \left(-\frac{M-1}{2}+m-1\right)} \right|  = \frac{\sin (\frac{\pi M \delta_u }{2})}{\sin (\frac{\pi  \delta_u }{2})}.
	\end{equation}
	The function $\frac{\sin (\frac{\pi M x }{2})}{\sin (\frac{\pi  x }{2})}$ exhibits behavior similar to that of the $sinc$ function and has a zero value at $\frac{2}{M}$. This function decreases monotonically over $x \in [0,\frac{2}{M}]$.   Thus, when the user's angle $\overline{\theta}_{IU}$ is exactly aligned with the angle of the best beam, the IRS beamforming gain reaches its maximum value, i.e., $\delta_u=0$ and $\left|\mathbf{a}_r^H (\overline{\theta}_{IU}) \boldsymbol{\phi}^\star \right|=M$. When the user's angle $\overline{\theta}_{IU}$ lies in the middle of two adjacent beam angles, the IRS beamforming gain is the lowest, i.e., $\delta_u =\frac{1}{L}$ and  $\left|\mathbf{a}_r^T (\overline{\theta}_{IU}) \boldsymbol{\phi}^\star \right|={\sin \left(\frac{\pi M  }{2 L}\right)} {\sin^{-1} (\frac{\pi  }{2 L})}$. 
	
	After beam scanning, the user finds the IRS's best beam index $\ell$ and corresponding received signal power, and feeds them back to the IRS controller. In Phase II, the REs can then adopt  this beam to reflect the signals from the BS to the communication user  during the data transmission phase (if IRS single-beam sensing is used). The achievable rate of the user in bits per second (bps) by taking into account the beam scanning overhead is thus given by
	\begin{equation} \label{exp_sweeping}
		R  = \frac{T-\tau}{T} \log_2 \left(1+ \frac{N P_t |\alpha_g|^2 |\alpha_h|^2}{\sigma^2} \frac{\sin^2 (\frac{\pi M \delta_u  }{2  })}{\sin^2 (\frac{\pi \delta_u  }{2  })} \right).
	\end{equation}

	\subsection{CRB and MSE for Initial Sensing in Phase I} \label{sec_mse_phaseI}
	In this subsection, the target angle is first estimated via the maximum likelihood estimator (MLE). Then,  the CRB and an approximated closed-form expression for MSE associated with the angle estimation is derived. 
	The received echo signals at the   SEs in \eqref{exp_sensors2} by letting $s[t]=1, \forall t$ can be represented as
		\begin{align} \label{exp_sensor_phaseI}
		\hspace{-0.1cm}	\mathbf{y}_s[t] = & \sqrt{N P_t } \alpha_g \left( \mathbf{H}_t \operatorname{diag} (\boldsymbol{\phi}[t]) \mathbf{a}_r (\theta_{BI}) +\mathbf{a}_s(\theta_{BI}) \right)    + \mathbf{n}_s[t] \notag \\
			 = & \sqrt{N P_t} \alpha_g \left( \alpha_s \mathbf{a}_s(\theta_{IT}) \mathbf{a}_r^H (\theta_{IT}) \operatorname{diag} (\boldsymbol{\phi}[t]) \mathbf{a}_r(\theta_{BI}) \right. \notag\\ 
			 & +\left. \mathbf{a}_s(\theta_{BI}) \right)    + \mathbf{n}_s[t] \notag\\
			 = &\sqrt{N P_t} \alpha_g \left( \alpha_s \mathbf{a}_s(\theta_{IT})  \mathbf{a}_r^H (\overline{\theta}_{IT}) \boldsymbol{\phi}[t] \right. \notag \\
			 & + \left. \mathbf{a}_s(\theta_{BI}) \right)   +\mathbf{n}_s[t], 
		\end{align}
	where $\overline{\theta}_{IT}= \theta_{IT} - \theta_{BI}$. Note that the first part of $\mathbf{y}_s[t]$ is due to the BS-REs-target-SEs link, while the second part is due to the BS-SEs link. In order to estimate the target angle, the BS-SEs link signal should be first canceled out. Fortunately, the angle $\theta_{BI}$ can be estimated in advance since the positions of the BS and IRS are fixed. By exploiting the asymptotic orthogonality of the array steering vectors \cite{6831723}, the BS-REs-target-SEs link can be extracted from $\mathbf{y}_s[t]$. Specifically, the echo signals $\widehat{\mathbf{y}}_s[t]$ from the target can be extracted as
	\begin{equation}
		\widehat{\mathbf{y}}_s[t] = \left(\mathbf{I}_{M_s}- \frac{\mathbf{a}_s(\theta_{BI}) \mathbf{a}_s^H(\theta_{BI})}{M_s}\right)\mathbf{y}_s[t] .
	\end{equation}
	Note that by ignoring the noise, we have  
		\begin{align}
			\overline{\mathbf{y}}_s[t] \triangleq &
			\frac{\mathbf{a}_s(\theta_{BI}) \mathbf{a}_s^H(\theta_{BI})}{M_s}\mathbf{y}_s[t] \notag \\
			= & \sqrt{N P_t} \alpha_g \Big( \frac{\alpha_s}{M_s} \mathbf{a}_s(\theta_{BI}) \mathbf{a}_s^H(\theta_{BI}) \mathbf{a}_s(\theta_{IT})  \mathbf{a}_r^H (\overline{\theta}_{IT}) \boldsymbol{\phi}[t] \notag  \\
			& + \mathbf{a}_s(\theta_{BI}) \Big) ,  
		\end{align}
	due to $\frac{1}{M_s} \mathbf{a}_s^H(\theta_{BI}) \mathbf{a}_s(\theta_{BI}) =1$.
	Note that 
	\begin{equation} \label{func_sin}
		f(\overline{\theta}_{IT})\triangleq \left|\frac{\mathbf{a}_s^H(\theta_{BI}) \mathbf{a}_s(\theta_{IT})}{M_s}\right|^2= \frac{\sin^2 (\frac{\pi M_s \overline{\theta}_{IT} }{2})}{M_s^2 \sin^2 (\frac{\pi  \overline{\theta}_{IT} }{2})},
	\end{equation}
	whose mainlobe is located within $[0,\frac{2}{M_s}]$. When $\overline{\theta}_{IT}=0$, we have $f(\overline{\theta}_{IT})=1$. When $\overline{\theta}_{IT} = \frac{2k}{M_s}, k=1,2,\cdots, M_s-1$,  we have $f(\overline{\theta}_{IT})=0$. When $\frac{2}{M_s} \leq |\overline{\theta}_{IT}| \leq 2- \frac{2}{M_s}$,  $f(\overline{\theta}_{IT}) \leq 6.25\%$ in the case of $M_s\geq 5$. Thus, when $|\overline{\theta}_{IT}| \geq \frac{2}{M_s}$, we have $\overline{\mathbf{y}}_s[t]\approx \sqrt{N P_t} \alpha_g \mathbf{a}_s(\theta_{BI})  $. Therefore, when $0 \leq |\overline{\theta}_{IT}| < \frac{2}{M_s}$, which corresponds to the undetectable region $\Omega_u$ of the target, the echo signal from the target cannot be extracted from the BS-SEs direct signal, and thus the target angle cannot be estimated. Otherwise, the extracted signal can be further expressed as
	\begin{equation} \label{ys_extracted}
		\widehat{\mathbf{y}}_s[t] = \sqrt{N P_t} \alpha_g \alpha_s \mathbf{a}_s(\theta_{IT})   \mathbf{a}_r^H (\overline{\theta}_{IT}) \boldsymbol{\phi} [t]  + \mathbf{n}_s[t].
	\end{equation}
	
	By collecting all echo signals from the target during the $L$ symbol durations for  beam scanning, we have  
		\begin{align} \label{exp_collect_sweeping}
		\hspace{-0.1cm}	\mathbf{Y} &= [\widehat{\mathbf{y}}_s[1], \widehat{\mathbf{y}}_s[2],\cdots, \widehat{\mathbf{y}}_s[L] ] \notag \\
			& = \sqrt{N P_t} \alpha_g \alpha_s \mathbf{a}_s(\theta_{IT})   \mathbf{a}_r^H (\overline{\theta}_{IT}) \left[ \boldsymbol{\phi}[1], \boldsymbol{\phi}[2], \cdots, \boldsymbol{\phi}[L] \right]   +\mathbf{N} \notag \\
			& \triangleq \sqrt{N P_t} \alpha_g \alpha_s \mathbf{a}_s (\theta_{IT}) \mathbf{q}^H(\theta_{IT}) \mathbf{X} + \mathbf{N},
		\end{align}
	where  $\mathbf{X} \triangleq   \left[ \boldsymbol{\phi}[1], \boldsymbol{\phi}[2], \cdots, \boldsymbol{\phi}[L] \right]$, $\mathbf{q}(\theta_{IT}) \triangleq \mathbf{a}_r (\overline{\theta}_{IT})$, and $\mathbf{N} \triangleq [\mathbf{n}_s[1], \mathbf{n}_s[2], \cdots, \mathbf{n}_s[L] ]$. Let $\mathbf{R}_x  \triangleq \frac{1}{L} \mathbf{X} \mathbf{X}^H$ represent the covariance matrix of $\mathbf{X}$. With the codebook designed as in \eqref{exp_codebook}, we have $\mathbf{X}=\mathbf{D}$ and $\mathbf{R}_x =   \mathbf{I}_M$. 
	For ease of notation, we simply re-denote $\theta_{IT}$ by $\theta$.   Let $\boldsymbol{\xi} =[\theta, \text{Re}\{\alpha_s\}, \text{Im}\{\alpha_s\}]^T\in \mathbb{R}^{3\times 1}$ denote the vector of the unknown parameters to be estimated, which includes the target's angle and the complex channel coefficients. Particularly, we are interested in characterizing the MSE for estimating the angle. By vectorizing \eqref{exp_collect_sweeping}, we have
	\begin{equation} 
		\operatorname{vec}(\mathbf{Y}) = \alpha_s \operatorname{vec}(\mathbf{U}(\theta))+ \operatorname{vec}(\mathbf{N}),
	\end{equation}
	where $\mathbf{U}(\theta) =   \sqrt{N P_t} \alpha_g \mathbf{a}_s(\theta) \mathbf{q}^H(\theta)  \mathbf{X} $. Then, the target angle can be estimated according to the following theorem. 
	\begin{theorem} \label{lemma1}
		The angle estimated via the MLE is given by
		\begin{equation} \label{est_mle}
			\theta_\text{MLE} = \arg \max \limits_{\theta} \quad \left|\mathbf{a}_s^H (\theta) \mathbf{Y} \mathbf{X}^H \mathbf{q}(\theta) \right|^2,
		\end{equation}	
		which can be solved by exhaustive search over $\left[-1, 1 \right]$.
	\end{theorem}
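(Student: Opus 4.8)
The plan is to derive the MLE by writing out the negative log-likelihood of $\operatorname{vec}(\mathbf{Y})$ under the CSCG noise model, then concentrating it over the nuisance parameter $\alpha_s$ before minimizing over $\theta$. First I would note that since $\operatorname{vec}(\mathbf{N})\sim\mathcal{CN}(\mathbf{0},\sigma^2\mathbf{I})$, the model $\operatorname{vec}(\mathbf{Y}) = \alpha_s\operatorname{vec}(\mathbf{U}(\theta)) + \operatorname{vec}(\mathbf{N})$ gives a log-likelihood proportional to $-\frac{1}{\sigma^2}\|\operatorname{vec}(\mathbf{Y}) - \alpha_s\operatorname{vec}(\mathbf{U}(\theta))\|^2$, so the MLE of $(\theta,\alpha_s)$ is the least-squares solution $\min_{\theta,\alpha_s}\|\mathbf{Y} - \alpha_s\mathbf{U}(\theta)\|_F^2$.

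Next I would carry out the inner minimization over $\alpha_s$ for fixed $\theta$. This is a standard linear least-squares step: the optimal $\alpha_s$ is $\hat\alpha_s(\theta) = \frac{\langle \mathbf{U}(\theta),\mathbf{Y}\rangle}{\|\mathbf{U}(\theta)\|_F^2}$ where $\langle\mathbf{A},\mathbf{B}\rangle = \operatorname{tr}(\mathbf{A}^H\mathbf{B})$, and substituting back collapses the cost to $\|\mathbf{Y}\|_F^2 - \frac{|\langle\mathbf{U}(\theta),\mathbf{Y}\rangle|^2}{\|\mathbf{U}(\theta)\|_F^2}$. Hence minimizing over $\theta$ is equivalent to maximizing $\frac{|\langle\mathbf{U}(\theta),\mathbf{Y}\rangle|^2}{\|\mathbf{U}(\theta)\|_F^2}$.

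Then I would compute the two ingredients explicitly using $\mathbf{U}(\theta) = \sqrt{NP_t}\,\alpha_g\,\mathbf{a}_s(\theta)\mathbf{q}^H(\theta)\mathbf{X}$. The numerator inner product is $\operatorname{tr}(\mathbf{X}^H\mathbf{q}(\theta)\mathbf{a}_s^H(\theta)\mathbf{Y})$ up to the scalar $\sqrt{NP_t}\,\alpha_g$, whose modulus squared is proportional to $|\mathbf{a}_s^H(\theta)\mathbf{Y}\mathbf{X}^H\mathbf{q}(\theta)|^2$. For the denominator, $\|\mathbf{U}(\theta)\|_F^2 = NP_t|\alpha_g|^2\,\mathbf{q}^H(\theta)\mathbf{X}\mathbf{X}^H\mathbf{q}(\theta)\cdot\|\mathbf{a}_s(\theta)\|^2$; here I would invoke $\|\mathbf{a}_s(\theta)\|^2 = M_s$ (constant in $\theta$) and, crucially, $\mathbf{X}\mathbf{X}^H = L\mathbf{R}_x = L\mathbf{I}_M$ from the DFT-codebook identity $\mathbf{R}_x = \mathbf{I}_M$ established just before the theorem, so $\mathbf{q}^H(\theta)\mathbf{X}\mathbf{X}^H\mathbf{q}(\theta) = L\|\mathbf{q}(\theta)\|^2 = LM$ is also constant in $\theta$. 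With the denominator $\theta$-independent, the maximization reduces exactly to $\arg\max_\theta |\mathbf{a}_s^H(\theta)\mathbf{Y}\mathbf{X}^H\mathbf{q}(\theta)|^2$, and since $\theta = \theta_{IT} = \sin(\zeta_{IT}) \in [-1,1]$ the search range follows.

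The main obstacle — really the only non-routine point — is justifying that the denominator $\|\mathbf{U}(\theta)\|_F^2$ does not depend on $\theta$; without the DFT-codebook structure giving $\mathbf{X}\mathbf{X}^H \propto \mathbf{I}_M$, the profile likelihood would retain a $\theta$-dependent normalization and the estimator would not simplify to the clean correlation form \eqref{est_mle}. I would therefore state this dependence on $\mathbf{R}_x = \mathbf{I}_M$ explicitly. A secondary (cosmetic) point is that all the constant prefactors $\sqrt{NP_t}$, $\alpha_g$, $M_s$, $L$, $M$ can be pulled out of the $\arg\max$ since they scale the objective by a positive constant independent of $\theta$.
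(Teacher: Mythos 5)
Your proposal is correct and takes essentially the same route as the paper's Appendix A: reduce the MLE to least squares, concentrate out $\alpha_s$ via $\hat\alpha_s(\theta)=\frac{(\operatorname{vec}(\mathbf{U}(\theta)))^H\operatorname{vec}(\mathbf{Y})}{\|\operatorname{vec}(\mathbf{U}(\theta))\|^2}$, and observe that the resulting denominator equals $LMM_s$ (times constants) and hence drops out of the $\arg\max$. Your explicit justification that $\mathbf{X}\mathbf{X}^H=L\mathbf{I}_M$ is what makes the denominator $\theta$-independent is the same fact the paper uses implicitly when it writes the concentrated cost as $\|\widetilde{\mathbf{y}}\|^2-\left|\mathbf{a}_s^H(\theta)\mathbf{Y}\mathbf{X}^H\mathbf{q}(\theta)\right|^2/(LMM_s)$.
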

	\begin{proof}
		See Appendix \ref{append_mle}.
	\end{proof} 
	
	\begin{figure}[t]
		\begin{centering}
			\includegraphics[width=.42\textwidth]{./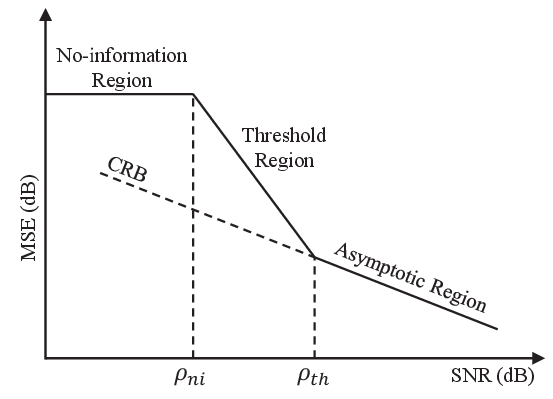}
			\caption{MSE versus SNR.}\label{fig_mse}
		\end{centering}
	\vspace{-0.5cm}
	\end{figure}
	Next, we derive the CRB and MSE of the angle estimation. The MSE curve, characterizing the performance of MLE, can typically be divided into three regions: asymptotic, threshold, and no-information regions \cite{1624646}, as depicted in Fig. \ref{fig_mse}. In the high SNR regime, the MSE is identical to the CRB, which is known as the   asymptotic region. As SNR decreases to a threshold $\rho_{th}$, the MSE starts to deviate from the CRB due to the presence of outliers, giving rise to the threshold region. Upon further decrease in SNR to a threshold $\rho_{ni}$, the desired signals become indistinguishable from the noise, resulting in a uniformly distributed estimation result across the entire parameter space, referred to as the no-information region. Therefore, a useful method to predict the MSE is the method of interval errors (MIE), which divides MSE into two parts: a local error term close to the true value (i.e., the CRB), and an outlier term accounting for global errors. The MIE approach is widely employed for MSE prediction under MLE\cite{1168170, 1344467, 1408188, 8681437}. Thus, we employ MIE to predict the MSE of angle estimation in this paper. Define $\hat{\theta}$ as the estimation of $\theta$, the MSE can be expressed as
		\begin{align}
			\mathrm{MSE} = &	\mathbb{E}[(\hat{\theta} - \theta)^2] \notag \\
			=& \Pr(\text{no outlier}) \mathbb{E}[(\hat{\theta}- \theta)^2 \big| \text{no outlier}] \notag \\
			&+ \Pr(\text{outlier}) \mathbb{E}[(\hat{\theta}- \theta)^2 \big| \text{outlier}],
		\end{align}
	where ``outlier" denotes the event that $\hat{\theta}$ falls outside  the mainlobe of the objective function. When MSE is located in the no-information region, we have $\hat{\theta} \sim U (-1,1)$ and $\mathbb{E}[(\hat{\theta}- \theta)^2 \big| \text{outlier}] = \frac{1}{3}$. When MSE is located in the asymptotic region, we have $\mathbb{E}[(\hat{\theta}- \theta)^2 \big| \text{no outlier}] = \mathrm{CRB}_\text{I}$ where $\mathrm{CRB}_\text{I}$ denotes the CRB of angle estimation in Phase I. Let $p=\Pr(\text{no outlier})$, the MSE can be rewritten as
	\begin{equation}\label{exp_mse_ori}
		\mathrm{MSE} = \frac{1-p}{3}+ p\cdot \mathrm{CRB}_\text{I}.
	\end{equation}
	In the sequel, we first derive $\mathrm{CRB}_\text{I}$, and then find the probability of the event of ``no outlier". 
	Let $\mathbf{F}\in \mathbb{R}^{3\times 3}$ denote the Fisher information matrix (FIM) for estimating $\boldsymbol{\xi}$. Since $\mathbf{N} \sim \mathcal{CN}(\mathbf{0}, \sigma^2 \mathbf{I}_{L M_s})$, each entry of $\mathbf{F}$ is given by \cite{kay1993fundamentals}   
	\begin{equation} 
		\vspace{-0.2cm}
		\mathbf{F}_{i,j} = \frac{2}{\sigma^2} \operatorname{Re}\left\{\frac{\partial \overline{\mathbf{U}}^H}{\partial \xi_i}  \frac{\partial \overline{\mathbf{U}}}{\partial \xi_j}\right\}, i,j \in\{1,2,3\},
	\end{equation}
where $\overline{\mathbf{U}} = \alpha_s \operatorname{vec}(\mathbf{U}(\theta))$.
	The FIM can be partitioned as
	\begin{equation}
		\vspace{-0.2cm}
		{\mathbf{F}}=	\begin{bmatrix}
		{\mathbf{F}}_{\theta \theta} & {\mathbf{F}}_{\theta \overline{\boldsymbol{\alpha}}} \\
		{\mathbf{F}}_{\theta \overline{\boldsymbol{\alpha}}}^{\mathrm{T}} & {\mathbf{F}}_{\overline{\boldsymbol{\alpha}} \overline{\boldsymbol{\alpha}}}
	\end{bmatrix},
	\end{equation}
	where $\overline{\boldsymbol{\alpha}}= [\operatorname{Re}\{\alpha_s\},\operatorname{Im}\{\alpha_s\}]^T$. The CRB for estimating the angle $\theta$ is defined as
	\begin{equation}
		\text{CRB}(\theta) = [\mathbf{F}^{-1}]_{1,1}=[\mathbf{F}_{\theta\theta} - \mathbf{F}_{\theta \overline{\boldsymbol{\alpha}}} \mathbf{F}_{\overline{\boldsymbol{\alpha}} \overline{\boldsymbol{\alpha}}}^{-1} \mathbf{F}_{\theta \overline{\boldsymbol{\alpha}}}^T]^{-1}.
	\end{equation}
	Then, the CRB for target sensing is given by the following theorem.
	\begin{theorem}
		The CRB for estimating the angle $\theta$ is given by
		\begin{equation} \label{exp_crb_ori}
			\begin{aligned}
				\mathrm{CRB}(\theta)
				&=\frac{\sigma^2}{2  |\alpha_s|^2\left(\operatorname{tr}\left(\dot{\mathbf{U}} (\theta)  \dot{\mathbf{U}}^{\mathrm{H}}(\theta) \right) - \frac{\left|\operatorname{tr}\left(\mathbf{U}(\theta) \dot{\mathbf{U}}^{\mathrm{H}}(\theta) \right)\right|^2} {\operatorname{tr}\left(\mathbf{U} (\theta) \mathbf{U}^{\mathrm{H}}(\theta) \right)}\right)}.	
			\end{aligned}
		\end{equation}
	\end{theorem}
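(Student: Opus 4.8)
The plan is to build the $3\times 3$ Fisher information matrix $\mathbf{F}$ directly from the vectorized linear-Gaussian model $\operatorname{vec}(\mathbf{Y})=\alpha_s\operatorname{vec}(\mathbf{U}(\theta))+\operatorname{vec}(\mathbf{N})$, to exploit the fact that the model is affine in $\alpha_s$ so that the amplitude block is a scaled identity, and then to read off the $(1,1)$ entry of $\mathbf{F}^{-1}$ through the Schur complement already displayed above \eqref{exp_crb_ori}.

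First I would differentiate $\overline{\mathbf{U}}=\alpha_s\operatorname{vec}(\mathbf{U}(\theta))$ with respect to the three real entries of $\boldsymbol{\xi}$, obtaining $\partial\overline{\mathbf{U}}/\partial\theta=\alpha_s\operatorname{vec}(\dot{\mathbf{U}}(\theta))$, $\partial\overline{\mathbf{U}}/\partial\operatorname{Re}\{\alpha_s\}=\operatorname{vec}(\mathbf{U}(\theta))$, and $\partial\overline{\mathbf{U}}/\partial\operatorname{Im}\{\alpha_s\}=j\operatorname{vec}(\mathbf{U}(\theta))$. Substituting these into $\mathbf{F}_{i,j}=\tfrac{2}{\sigma^2}\operatorname{Re}\{\partial\overline{\mathbf{U}}^{H}/\partial\xi_i\cdot\partial\overline{\mathbf{U}}/\partial\xi_j\}$ and using $\operatorname{vec}(\mathbf{A})^{H}\operatorname{vec}(\mathbf{B})=\operatorname{tr}(\mathbf{A}^{H}\mathbf{B})$ together with the cyclic property of the trace gives $\mathbf{F}_{\theta\theta}=\tfrac{2}{\sigma^2}|\alpha_s|^2\operatorname{tr}(\dot{\mathbf{U}}\dot{\mathbf{U}}^{H})$; the amplitude block $\mathbf{F}_{\overline{\boldsymbol{\alpha}}\overline{\boldsymbol{\alpha}}}=\tfrac{2}{\sigma^2}\operatorname{tr}(\mathbf{U}\mathbf{U}^{H})\mathbf{I}_2$, where the $\operatorname{Re}$--$\operatorname{Im}$ off-diagonal entry vanishes because $\operatorname{tr}(\mathbf{U}^{H}\mathbf{U})$ is real; and the coupling row $\mathbf{F}_{\theta\overline{\boldsymbol{\alpha}}}=\tfrac{2}{\sigma^2}\big[\operatorname{Re}\{\alpha_s^{*}a\},\,-\operatorname{Im}\{\alpha_s^{*}a\}\big]$ with $a\triangleq\operatorname{tr}(\mathbf{U}\dot{\mathbf{U}}^{H})$.

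Then, because $\mathbf{F}_{\overline{\boldsymbol{\alpha}}\overline{\boldsymbol{\alpha}}}$ is a scaled identity, its inverse is immediate and the Schur complement collapses to $\mathbf{F}_{\theta\theta}-\tfrac{\sigma^2}{2\operatorname{tr}(\mathbf{U}\mathbf{U}^{H})}\big(\mathbf{F}_{\theta,\operatorname{Re}}^2+\mathbf{F}_{\theta,\operatorname{Im}}^2\big)$. The step that makes everything close up is the identity $(\operatorname{Re}\{\alpha_s^{*}a\})^2+(\operatorname{Im}\{\alpha_s^{*}a\})^2=|\alpha_s|^2|a|^2$, which eliminates the unknown phase of $\alpha_s$ and turns the subtracted term into $\tfrac{2|\alpha_s|^2}{\sigma^2}\cdot\tfrac{|a|^2}{\operatorname{tr}(\mathbf{U}\mathbf{U}^{H})}$. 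Factoring out $\tfrac{2|\alpha_s|^2}{\sigma^2}$, taking the reciprocal, and recalling $a=\operatorname{tr}(\mathbf{U}\dot{\mathbf{U}}^{H})$ and $\operatorname{tr}(\dot{\mathbf{U}}\dot{\mathbf{U}}^{H})=\operatorname{tr}(\dot{\mathbf{U}}^{H}\dot{\mathbf{U}})$ reproduces \eqref{exp_crb_ori}.

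I do not expect a conceptual obstacle: this is a standard separable-parameter CRB computation. The only places that need genuine care are keeping the conjugation on $\alpha_s$ consistent so that the real and imaginary squared terms recombine cleanly into $|\alpha_s|^2|a|^2$ (hence the final bound depends on $\alpha_s$ only through $|\alpha_s|^2$), and checking that the $\operatorname{Re}\{\alpha_s\}$--$\operatorname{Im}\{\alpha_s\}$ coupling in $\mathbf{F}$ indeed vanishes; these routine algebraic manipulations would be deferred to an appendix.
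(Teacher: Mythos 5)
Your proposal is correct and follows essentially the same route as the paper's Appendix B: differentiate $\overline{\mathbf{U}}=\alpha_s\operatorname{vec}(\mathbf{U}(\theta))$ with respect to $(\theta,\operatorname{Re}\{\alpha_s\},\operatorname{Im}\{\alpha_s\})$, assemble the FIM blocks $\mathbf{F}_{\theta\theta}$, $\mathbf{F}_{\theta\overline{\boldsymbol{\alpha}}}$, $\mathbf{F}_{\overline{\boldsymbol{\alpha}}\overline{\boldsymbol{\alpha}}}=\tfrac{2}{\sigma^2}\operatorname{tr}(\mathbf{U}\mathbf{U}^H)\mathbf{I}_2$ via the $\operatorname{vec}$--trace identity, and take the Schur complement, with the $(\operatorname{Re})^2+(\operatorname{Im})^2=|\cdot|^2$ recombination eliminating the phase of $\alpha_s$. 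The only difference is notational (the paper writes the amplitude derivative as $[1,j]\otimes\operatorname{vec}(\mathbf{U})$ rather than component-wise), so no further comparison is needed.
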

	\begin{proof}
		See Appendix \ref{append_crb}.
	\end{proof}
	
	With the array response vector  defined as in  \eqref{exp_array_resp}, we obtain
	\begin{equation} \label{orth_as}
		\mathbf{a}_s^H (\theta) \dot{\mathbf{a}}_s(\theta) = 0,\quad \dot{\mathbf{a}}_s^H(\theta) \mathbf{a}_s(\theta) = 0,
	\end{equation}
	\begin{equation} \label{orth_q}
		\mathbf{q}^H (\theta) \dot{\mathbf{q}}(\theta) = 0,\quad \dot{\mathbf{q}}^H(\theta) \mathbf{q}(\theta) = 0, \forall \theta.
	\end{equation}
	Consequently, the CRB for estimating the angle $\theta$  can be simplified as
		\begin{align} \label{exp_crb_phaseI}
			\mathrm{CRB}_\text{I}&=\frac{\sigma^2}{2 L N P_t |\alpha_s|^2 |\alpha_g|^2 \left( M \|\dot{\mathbf{a}}_s(\theta)\|^2 + M_s \|\dot{\mathbf{q}}(\theta)\|^2  \right) } \notag \\
			&= \frac{6 \sigma^2}{L N P_t |\alpha_g|^2 |\alpha_s|^2 \pi^2 M M_s (M^2 + M_s^2 -2)} \notag \\
			&  \triangleq \frac{1}{\rho_t  \pi^2 N L} \frac{6}{ M M_s (M^2+ M_s^2-2)},
		\end{align}
	where $\rho_t\triangleq \frac{P_t |\alpha_g|^2 |\alpha_s|^2}{\sigma^2}$ represents the SNR of the target. For a given set of SNR and the number of BS antennas, we can improve the sensing accuracy by increasing the codebook size $L$, the number of   REs $M$, and the number of   SEs $M_s$, according to \eqref{exp_crb_phaseI}.	It is worth noting that the estimation of the spatial direction $\theta_{IT}$ exhibits an interesting characteristic: the resulting CRB remains unaffected by the spatial direction itself, as the employed codebook uniformly divides the entire spatial direction space. However, when estimating the physical angle $\zeta_{IT}$ of the target, the corresponding CRB becomes dependent on the specific physical angle. In addition, it can be observed from \eqref{exp_crb_phaseI} that  the number of REs $M$ and the number of SEs $M_s$ play equal roles in the CRB. Given that SEs have higher hardware cost, it would be more favorable to deploy more REs than SEs ($M > M_s$) if the total number of elements are fixed (i.e, $M+M_s$ is fixed). 
	
	Next, we find the probability of the event of ``no outlier", which can be determined according to the following theorem. 
	\begin{theorem}
		The probability of the event of ``no outlier" under the MLE can be approximated by
		\begin{equation} \label{p_no_outlier}
			p \approx \left(1-\frac{1}{2}\exp\left(-\frac{L \rho_t N M M_s}{2}\right)\right)^{M_s+M-2}.
		\end{equation}	
	\end{theorem}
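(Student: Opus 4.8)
\emph{Proof proposal.} The plan is to invoke the method of interval errors (MIE) recalled just before \eqref{exp_mse_ori}. By definition an ``outlier'' is the event that the MLE objective $f(\theta')\triangleq\big|\mathbf{a}_s^H(\theta')\mathbf{Y}\mathbf{X}^H\mathbf{q}(\theta')\big|^2$ from \eqref{est_mle}, evaluated at some direction $\theta'$ lying in a sidelobe of $f(\cdot)$, exceeds the mainlobe value $f(\theta)$ at the true direction $\theta$. So the first step is to discretize the sidelobe region into one representative ``test point'' per sidelobe, $\{\theta_k\}_{k=1}^{K}$, and write ``no outlier'' $=\bigcap_k\{f(\theta)>f(\theta_k)\}$. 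Using \eqref{exp_collect_sweeping} together with $\mathbf{X}\mathbf{X}^H=L\mathbf{R}_x=L\mathbf{I}_M$, the noise-free $f(\theta')$ is proportional to $\big|\mathbf{a}_s^H(\theta')\mathbf{a}_s(\theta)\big|^2\,\big|\mathbf{q}^H(\theta')\mathbf{q}(\theta)\big|^2$, i.e. a product of a length-$M_s$ Dirichlet-type kernel (from $\mathbf{a}_s$) and a length-$M$ one (from $\mathbf{q}=\mathbf{a}_r(\overline{\theta}_{IT})$), both in the single variable $\theta'-\theta$; consequently $f$ has $M_s-1$ sidelobes inherited from the first factor and $M-1$ from the second, which fixes $K=M+M_s-2$. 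Then, following the MIE decorrelation/dominance heuristic, I would approximate $p\approx\prod_{k=1}^{K}\big(1-P_k\big)$ with $P_k\triangleq\Pr\big(f(\theta_k)>f(\theta)\big)$.

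The second step is the closed-form evaluation of a single $P_k$. Write $T(\theta')\triangleq\mathbf{a}_s^H(\theta')\mathbf{Y}\mathbf{X}^H\mathbf{q}(\theta')$ so $f=|T|^2$. From \eqref{exp_collect_sweeping}, $T(\theta)=\sqrt{NP_t}\,\alpha_g\alpha_s\,\|\mathbf{a}_s(\theta)\|^2\,\mathbf{q}^H(\theta)\mathbf{X}\mathbf{X}^H\mathbf{q}(\theta)+w_\theta=\sqrt{NP_t}\,\alpha_g\alpha_s\,LMM_s+w_\theta$, a complex Gaussian with a deterministic mean and variance $\mathrm{Var}(w_\theta)=\sigma^2\|\mathbf{a}_s(\theta)\|^2\|\mathbf{X}^H\mathbf{q}(\theta)\|^2=\sigma^2 LMM_s$. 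At a test point $\theta_k$ in a sidelobe, the signal part of $T(\theta_k)$ factors as $\big(\mathbf{a}_s^H(\theta_k)\mathbf{a}_s(\theta)\big)\big(\mathbf{q}^H(\theta)\mathbf{X}\mathbf{X}^H\mathbf{q}(\theta_k)\big)$ and is (asymptotically) zero by the orthogonality of DFT-spaced steering vectors, while $\mathrm{Var}(w_{\theta_k})=\sigma^2 LMM_s$ as well and the same orthogonality makes $w_{\theta_k}$ uncorrelated with $w_\theta$; hence $T(\theta)$ and $T(\theta_k)$ are two independent complex Gaussians of equal variance, one of them zero-mean. Picking the larger of $|T(\theta_k)|^2$ and $|T(\theta)|^2$ is then exactly the classical noncoherent binary orthogonal detection problem, whose error probability is $P_k=\tfrac12 e^{-\gamma/2}$ with $\gamma=\dfrac{NP_t|\alpha_g|^2|\alpha_s|^2 (LMM_s)^2}{\sigma^2\,LMM_s}=N\rho_t\,LMM_s$ and $\rho_t$ as defined after \eqref{exp_crb_phaseI}. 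Thus $P_k=\tfrac12\exp\!\big(-L\rho_t NMM_s/2\big)$ for every $k$. Substituting this common value and $K=M+M_s-2$ into $p\approx\prod_k(1-P_k)$ then yields \eqref{p_no_outlier}, which in turn feeds into the MSE formula \eqref{exp_mse_ori}.

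The part I expect to be the main obstacle is not the algebra but the MIE modelling: justifying that the dominant sidelobes of the \emph{product} beampattern are adequately captured by exactly $M+M_s-2$ test points, that the objective values at these points may be treated as mutually independent, and that the intersection $\bigcap_k$ factorizes into the product of the pairwise events. These are the standard yet heuristic ingredients of threshold-region analysis, and the approximation sign in \eqref{p_no_outlier} precisely reflects them; the per-point computation of $P_k$, by contrast, is routine once $\mathbf{R}_x=\mathbf{I}_M$ and the (asymptotic) orthogonality of the DFT-grid steering vectors are exploited.
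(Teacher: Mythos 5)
Your proposal is correct and follows essentially the same route as the paper's Appendix C: discretize onto the $M_s-1$ and $M-1$ orthogonal grid points of the two Dirichlet kernels, note that the objective is central $\chi_2^2$ there and noncentral $\chi_2^2$ at the true angle, factorize ``no outlier'' into independent pairwise comparisons, and evaluate each pairwise error as $\tfrac12\exp(-L\rho_t N M M_s/2)$. The only cosmetic difference is that you invoke the classical noncoherent binary orthogonal detection result for the pairwise probability, whereas the paper evaluates the equivalent Marcum-Q integral explicitly; the heuristic ingredients you flag (grid discretization, independence, product factorization) are exactly the ones the paper also relies on.
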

	\begin{proof}
		See Appendix \ref{append_p_no_outlier}.
	\end{proof} 
	By substituting \eqref{exp_crb_phaseI} and \eqref{p_no_outlier} back into \eqref{exp_mse_ori}, we can obtain an approximate closed-form expression of MSE. With the obtained MSE, we have the following corollary.
	\begin{corollary} \label{corollary_1}
		The no-information threshold $\rho_{ni}$ of the MSE under the MLE can be approximated by
		\begin{equation}
			\rho_{ni} \approx -\frac{2}{L N M M_s} \ln \left[ 2\left( 1- (1-\beta_{ni})^{\frac{1}{M_s+M-2}}\right)\right],
		\end{equation}
		where $\beta_{ni}\in (0,1)$ is an empirical parameter, and is usually set as $\frac{8}{9}$ \cite{1168170}. The breakdown threshold $\rho_{th}$ of the MSE under the MLE can be approximated by
		\begin{equation}
			\rho_{th} \approx \frac{2}{L N M M_s} \left[ \ln \left(\rho_0 \right)+ \ln \left(\ln \left(\rho_0 \right) \right) \right],
		\end{equation}	
		where $\rho_0 = \frac{\pi^2 M^2 M_s^2 (M_s+M-2)}{2}$.
	\end{corollary}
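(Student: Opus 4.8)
The plan is to feed the two closed forms already in hand — the Phase-I CRB $\mathrm{CRB}_\text{I}$ in \eqref{exp_crb_phaseI} and the no-outlier probability $p$ in \eqref{p_no_outlier} — into the method-of-interval-errors decomposition \eqref{exp_mse_ori}, $\mathrm{MSE}=\tfrac{1-p}{3}+p\,\mathrm{CRB}_\text{I}$, and read off the two regime boundaries. It is convenient to abbreviate $\gamma\triangleq \tfrac{L\rho_t N M M_s}{2}$, so that \eqref{p_no_outlier} reads $p\approx\big(1-\tfrac{1}{2} e^{-\gamma}\big)^{M+M_s-2}$ and \eqref{exp_crb_phaseI} becomes $\mathrm{CRB}_\text{I}=\tfrac{3}{\gamma\pi^2(M^2+M_s^2-2)}$. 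Each threshold will first be found as a value $\gamma_{ni}$ or $\gamma_{th}$ of $\gamma$ and then mapped back through $\rho=\tfrac{2\gamma}{LNMM_s}$.

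For the no-information threshold I would use the standard definition that $\rho_{ni}$ is the SNR at which the outlier event occurs with the empirical probability $\beta_{ni}$ (the value $8/9$ of \cite{1168170}), i.e.\ $p=1-\beta_{ni}$. This is pure algebra: raise \eqref{p_no_outlier} to the power $1/(M+M_s-2)$, isolate $\tfrac{1}{2} e^{-\gamma_{ni}}=1-(1-\beta_{ni})^{1/(M+M_s-2)}$, take logarithms, and substitute $\gamma_{ni}=\tfrac{L\rho_{ni}NMM_s}{2}$; solving for $\rho_{ni}$ gives the stated expression. One checks along the way that $2\big(1-(1-\beta_{ni})^{1/(M+M_s-2)}\big)\in(0,1)$ for the relevant values of $M+M_s$, so the argument of the logarithm lies in $(0,1)$ and $\rho_{ni}>0$.

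For the breakdown threshold I would locate the SNR at which $\mathrm{MSE}$ starts to peel away from $\mathrm{CRB}_\text{I}$ by balancing, in the MIE sense, the outlier term $\tfrac{1-p}{3}$ against the local term, using the small-outlier-probability expansion $1-p\approx\tfrac{M+M_s-2}{2}e^{-\gamma}$ together with the closed form of $\mathrm{CRB}_\text{I}$. After collecting the constants and retaining the leading behaviour in $M,M_s$, the balance condition reduces to a single scalar transcendental equation $\gamma_{th}e^{-\gamma_{th}}=1/\rho_0$ with $\rho_0=\tfrac{\pi^2 M^2 M_s^2(M+M_s-2)}{2}$. Its relevant root is the non-principal (i.e.\ $W_{-1}$) branch of the Lambert $W$ function, $\gamma_{th}=-W_{-1}(-1/\rho_0)$, whose large-argument asymptotics $-W_{-1}(-1/\rho_0)=\ln\rho_0+\ln\ln\rho_0+o(1)$ yield $\gamma_{th}\approx\ln\rho_0+\ln\ln\rho_0$; mapping back through $\rho=\tfrac{2\gamma}{LNMM_s}$ gives the claimed formula for $\rho_{th}$.

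I expect the delicate part to be entirely on the $\rho_{th}$ side: pinning down which fixed-proportion criterion between $\tfrac{1-p}{3}$ and the local error is the "right" definition, consistent with the MIE conventions and with the numerical constant $\rho_0$ as stated; justifying that the first-order truncation of $(1-\tfrac{1}{2} e^{-\gamma})^{M+M_s-2}$ is legitimate in the SNR window where the breakdown occurs; and verifying that the discarded Lambert-$W$ remainder is genuinely $o(1)$ relative to $\ln\rho_0+\ln\ln\rho_0$. The $\rho_{ni}$ formula, by contrast, follows immediately once the defining relation $p=1-\beta_{ni}$ is granted.
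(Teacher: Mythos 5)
The paper does not actually supply a proof of this corollary --- it states only that ``the proof is similar to that in \cite{1168170}'' --- so your proposal can only be judged against the MIE machinery of Section III-B and the formulas it must reproduce. Your treatment of $\rho_{ni}$ is complete and correct: defining the no-information point by $p=1-\beta_{ni}$, inverting \eqref{p_no_outlier} via $\tfrac{1}{2}e^{-\gamma_{ni}}=1-(1-\beta_{ni})^{1/(M+M_s-2)}$ with $\gamma=\tfrac{L\rho_t NMM_s}{2}$, and mapping back through $\rho=\tfrac{2\gamma}{LNMM_s}$ reproduces the stated expression exactly, and this is evidently the intended (Athley-style) derivation.

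The $\rho_{th}$ half, however, contains a genuine unresolved gap that you yourself flag but do not close: the constant $\rho_0$. Your skeleton is right --- linearize $1-p\approx\tfrac{M+M_s-2}{2}e^{-\gamma}$, balance the outlier term against the local term, reduce to $\gamma_{th}e^{-\gamma_{th}}=1/\rho_0$, and invoke the $W_{-1}$ asymptotics to get $\gamma_{th}\approx\ln\rho_0+\ln\ln\rho_0$; this is the only way to obtain the $\ln(\cdot)+\ln\ln(\cdot)$ form, so the structure of the claimed formula is certainly recovered. But the natural balance condition $\tfrac{1-p}{3}=\mathrm{CRB}_\text{I}$, with $\mathrm{CRB}_\text{I}=\tfrac{3}{\gamma\pi^2(M^2+M_s^2-2)}$ from \eqref{exp_crb_phaseI}, yields $\rho_0=\tfrac{\pi^2(M+M_s-2)(M^2+M_s^2-2)}{18}$, which is not the stated $\rho_0=\tfrac{\pi^2M^2M_s^2(M+M_s-2)}{2}$; the two differ by a factor $\tfrac{9M^2M_s^2}{M^2+M_s^2-2}$ that grows with the array sizes (and shifts the predicted threshold by roughly $1$--$2$ dB for the paper's parameters), so this is not an innocuous choice of a fixed proportionality constant. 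Reverse-engineering the stated $\rho_0$ requires equating $1-p$ to $\tfrac{M^2+M_s^2-2}{3M^2M_s^2}\,\mathrm{CRB}_\text{I}$, i.e.\ an effective outlier variance of order $M_s^2$, which cannot arise from the $\mathbb{E}[(\hat\theta-\theta)^2\,|\,\text{outlier}]=\tfrac{1}{3}$ model used in \eqref{exp_mse_ori}. So to complete the proof you must either identify the precise breakdown criterion inherited from \cite{1168170} that produces $M^2M_s^2$ (plausibly one phrased in terms of the squared beamforming gain $(MM_s)^2$ rather than the outlier variance), or conclude that the stated $\rho_0$ does not follow from the MIE decomposition as set up in the paper; your proposal as written establishes neither.
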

	\begin{proof}
		The proof is similar to that in \cite{1168170} and thus omitted here due to space limitation.
	\end{proof} 

	Based on Corollary 1, it is observed that when the  SNR of the target surpasses $\rho_{th}$, the MSE coincides with the CRB. If the  SNR lies within the range of $\rho_{ni}$ and $\rho_{th}$, the MSE deviates from the CRB. Conversely, when the  SNR falls below $\rho_{ni}$, the MSE becomes independent of the  SNR. Consequently, in order to achieve accurate estimation of the target angle, it is imperative for the  SNR to exceed $\rho_{th}$.

	\section{Enhanced Sensing during Data Transmission} \label{sec_target_enhancement}
	 {At the end of IRS beam scanning process, the communication user identifies the IRS's best beam index $\ell$ and the corresponding received signal power, and feeds them back to the IRS controller where the corresponding user's SNR  $\gamma_\ell$ can be obtained.} The SEs perform an initial estimation of the target angle according to Theorem \ref{lemma1}. Subsequently, two cases arise in the data transmission phase. If the estimated angle is close to the communication user, all REs should reflect the data signal towards the user with the IRS's best beam, which also helps achieve high SNR at the target to enhance the estimation 	accuracy (i.e., IRS single-beam). On the other hand, if the angles of the user and the target with respect to the IRS are well separated, beam splitting could be conducted such that some of REs adopt a different beam to provide beamforming gain to improve the sensing accuracy, provided that the achievable rate of the user with the remaining REs is still satisfactory. In this section, we investigate the enhancement of target angle estimation when the user's achievable rate has sufficient margin, considering the above two IRS beamforming strategies, respectively.
	
	\subsection{Single Beam Based Sensing} \label{sec_no_split}
	A straightforward strategy for REs is to adopt the best beam  $\boldsymbol{\phi}^\star=\mathbf{a}_r (\eta_\ell)$ for the communication user. In this case, the echo signals from the target in \eqref{ys_extracted} can be rewritten as
	\begin{equation} 
		\widehat{\mathbf{y}}_s[t] = \sqrt{N P_t} \alpha_g \alpha_s \mathbf{a}_s(\theta_{IT})   \mathbf{a}_r^H (\overline{\theta}_{IT}) \boldsymbol{\phi}^\star s[t]+ \mathbf{n}_s[t].
	\end{equation}	
	The IRS beamforming gain for the target can be expressed as
	\begin{equation} 
		\left|\mathbf{a}_r^H (\overline{\theta}_{IT}) \boldsymbol{\phi}^\star \right| = \left| \sum\limits_{m=1}^M e^{j\pi \delta_t \left(-\frac{M-1}{2}+m-1\right)} \right|  = \frac{\sin (\frac{\pi M \delta_t }{2})}{\sin (\frac{\pi  \delta_t }{2})},
	\end{equation}	
	where $\delta_t \triangleq |\overline{\theta}_{IT}- \eta_\ell|$ denotes the spatial direction difference between $\overline{\theta}_{IT}$ and the best beam $\eta_\ell$. As described in Section \ref{sec_mse_phaseI}, this IRS beamforming gain is  small when $\delta_t \geq \frac{2}{M}$. Therefore, the region where the target estimation can benefit from this strategy is given by $\{ \Omega_t| \theta_{IT}: |\theta_{IT} - \theta_{BI} - \eta_\ell| < \frac{2}{M}, \theta_{IT} \notin \Omega_u\}$. In the following, we first estimate the target angle via the MLE, and then derive its CRB.
	
	By collecting the echo signals from the target in Phase II, we have 
		\begin{align} \label{exp_echoes_co2}
			\mathbf{Y}_2 =& \left[\widehat{\mathbf{y}}_s[\tau+1], \widehat{\mathbf{y}}_s[\tau+2],\cdots, \widehat{\mathbf{y}}_s[\tau+ \tau_2]\right] \notag \\
			=& \sqrt{N P_t} \alpha_g \alpha_s \mathbf{a}_s (\theta_{IT})   \mathbf{q}^H (\theta_{IT}) \boldsymbol{\phi}^\star \notag \\
			&\times [s[\tau+1],s[\tau+2], \cdots, s[\tau+ \tau_2]] + \mathbf{N}_2 \notag \\
			\triangleq & \sqrt{N P_t} \alpha_g \alpha_s  \mathbf{a}_s (\theta_{IT})   \mathbf{q}^H (\theta_{IT}) \mathbf{X}_2 + \mathbf{N}_2,
		\end{align}
	where $\mathbf{X}_2 \triangleq \boldsymbol{\phi}^\star [s[\tau+1],s[\tau+2], \cdots, s[\tau+\tau_2]] \in \mathbb{C}^{M\times \tau_2}$, $\tau_2 = T- \tau$, and $\mathbf{a}_r(\overline{\theta}_{IT})$ is replaced by $\mathbf{q}(\theta_{IT})$ as in \eqref{exp_collect_sweeping}.  The data signals are assumed to be independent of each other and satisfy $\mathbb{E}(|s[\tau_i]|^2) = 1$ for $\tau_i\in (\tau,\tau+\tau_2]$. Thereby, $\mathbf{X}_2 \mathbf{X}_2^H = \tau_2 \boldsymbol{\phi}^\star (\boldsymbol{\phi}^\star)^H$. Then, we collect the echo signals in Phase I and Phase II together to estimate the angle of the target.   By collecting  \eqref{exp_collect_sweeping} and  \eqref{exp_echoes_co2}, we have
	\begin{equation} \label{exp_echoes_all}
	\hspace{-0.1cm}	\overline{ \mathbf{Y}} = \sqrt{N P_t} \alpha_g \alpha_s \mathbf{a}_s (\theta_{IT})   \left[\mathbf{q}^H (\theta_{IT}) \mathbf{X}, \mathbf{q}^H(\theta_{IT}) \mathbf{X}_2  \right] +[\mathbf{ N}, \mathbf{ N}_2].
	\end{equation}
	Accordingly, the target angle can be estimated based on the following theorem.
	\begin{theorem}
		The angle estimated via the MLE by combining the echo signals in Phase I and Phase II together is given by
		\begin{equation}  \label{exp_mse_II}
			\theta_\text{MLE} = \arg \max \limits_{\theta} \quad \frac{\left|\mathbf{a}_s^H (\theta) \mathbf{Y} \mathbf{X}^H \mathbf{q}(\theta) + \mathbf{a}_s^H(\theta) \mathbf{Y}_2 \mathbf{X}_2^H \mathbf{q}(\theta) \right|^2} {L M+ (T-\tau) \left| \mathbf{q}^H(\theta) \boldsymbol{\phi}^\star \right|^2},
		\end{equation}	
		which can be found by exhaustive search over $[-1,1]$.
	\end{theorem}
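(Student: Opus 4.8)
The plan is to reproduce, for the combined Phase~I/Phase~II observation \eqref{exp_echoes_all}, the concentrated-likelihood argument already used to establish Theorem~\ref{lemma1}. Because $\alpha_g$ is known and only $\alpha_s$ is unknown, I first absorb everything into a single complex unknown $\beta \triangleq \sqrt{N P_t}\,\alpha_g\alpha_s$ (the map $\alpha_s \mapsto \beta$ is a bijection on $\mathbb{C}$). Defining the template matrix $\mathbf{S}(\theta) \triangleq \mathbf{a}_s(\theta)\,\mathbf{b}^H(\theta)$ with the row vector $\mathbf{b}^H(\theta) \triangleq [\,\mathbf{q}^H(\theta)\mathbf{X},\ \mathbf{q}^H(\theta)\mathbf{X}_2\,]$, \eqref{exp_echoes_all} reads $\overline{\mathbf{Y}} = \beta\,\mathbf{S}(\theta) + [\mathbf{N},\mathbf{N}_2]$. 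Since $[\mathbf{N},\mathbf{N}_2]$ is white CSCG with variance $\sigma^2$, the negative log-likelihood is, up to a $\theta$- and $\beta$-independent constant, proportional to $\|\overline{\mathbf{Y}} - \beta\,\mathbf{S}(\theta)\|_F^2$.

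For fixed $\theta$ this is an ordinary complex linear least-squares problem in $\beta$; its minimizer is $\widehat{\beta}(\theta) = \operatorname{tr}(\mathbf{S}^H(\theta)\overline{\mathbf{Y}})/\|\mathbf{S}(\theta)\|_F^2$, and the resulting minimum value equals $\|\overline{\mathbf{Y}}\|_F^2 - |\operatorname{tr}(\mathbf{S}^H(\theta)\overline{\mathbf{Y}})|^2/\|\mathbf{S}(\theta)\|_F^2$. Minimizing this over $\theta$ is therefore equivalent to maximizing the ratio $|\operatorname{tr}(\mathbf{S}^H(\theta)\overline{\mathbf{Y}})|^2/\|\mathbf{S}(\theta)\|_F^2$, which, because of its oscillatory dependence on $\theta$, is carried out by a one-dimensional search over $[-1,1]$.

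What remains is to rewrite numerator and denominator in the stated form. Using the cyclic property of the trace together with $\mathbf{S}(\theta) = \mathbf{a}_s(\theta)\mathbf{b}^H(\theta)$, $\operatorname{tr}(\mathbf{S}^H(\theta)\overline{\mathbf{Y}}) = \mathbf{a}_s^H(\theta)\,\overline{\mathbf{Y}}\,\mathbf{b}(\theta)$; partitioning $\overline{\mathbf{Y}} = [\mathbf{Y},\mathbf{Y}_2]$ conformably with $\mathbf{b}^H(\theta)$ then gives $\overline{\mathbf{Y}}\,\mathbf{b}(\theta) = \mathbf{Y}\mathbf{X}^H\mathbf{q}(\theta) + \mathbf{Y}_2\mathbf{X}_2^H\mathbf{q}(\theta)$, i.e.\ the numerator of \eqref{exp_mse_II}. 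For the denominator, $\|\mathbf{S}(\theta)\|_F^2 = \|\mathbf{a}_s(\theta)\|^2\,\mathbf{b}^H(\theta)\mathbf{b}(\theta) = M_s\big(\mathbf{q}^H(\theta)\mathbf{X}\mathbf{X}^H\mathbf{q}(\theta) + \mathbf{q}^H(\theta)\mathbf{X}_2\mathbf{X}_2^H\mathbf{q}(\theta)\big)$; substituting $\mathbf{X}\mathbf{X}^H = L\mathbf{I}_M$ (the DFT codebook identity from \eqref{exp_codebook} with $L\ge M$), $\|\mathbf{q}(\theta)\|^2 = M$, and $\mathbf{X}_2\mathbf{X}_2^H = (T-\tau)\,\boldsymbol{\phi}^\star(\boldsymbol{\phi}^\star)^H$ yields $\|\mathbf{S}(\theta)\|_F^2 = M_s\big(LM + (T-\tau)|\mathbf{q}^H(\theta)\boldsymbol{\phi}^\star|^2\big)$. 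The constant $M_s$ does not affect the maximizer and is dropped, which recovers \eqref{exp_mse_II}.

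The calculations above (the profile-likelihood step, the trace identities) are routine; the points that require some care are (i) combining the known $\alpha_g$ with the unknown $\alpha_s$ into one scalar so that the single-amplitude least-squares formula applies verbatim, and (ii) the two quadratic-form identities $\mathbf{X}\mathbf{X}^H = L\mathbf{I}_M$ and $\mathbf{X}_2\mathbf{X}_2^H = (T-\tau)\boldsymbol{\phi}^\star(\boldsymbol{\phi}^\star)^H$. The first is the codebook orthogonality already noted after \eqref{exp_collect_sweeping}. The second follows from $\mathbf{X}_2 = \boldsymbol{\phi}^\star[\,s[\tau+1],\dots,s[\tau+\tau_2]\,]$ being rank one together with the unit-power data assumption $\mathbb{E}(|s[\tau_i]|^2)=1$, so that $\sum_i|s[\tau+i]|^2 = T-\tau$; this is the only place an approximation is invoked, and it is the same one already used in writing \eqref{exp_echoes_co2}, so I expect it to be the main (and only mild) obstacle.
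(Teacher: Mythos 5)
Your proposal is correct and follows exactly the route the paper intends: the paper omits this proof, stating only that it parallels Appendix~\ref{append_mle}, and your concentrated (profile) likelihood argument — eliminating the complex amplitude by least squares and simplifying the resulting ratio via $\mathbf{X}\mathbf{X}^H = L\mathbf{I}_M$, $\|\mathbf{q}(\theta)\|^2 = M$, and $\mathbf{X}_2\mathbf{X}_2^H = (T-\tau)\boldsymbol{\phi}^\star(\boldsymbol{\phi}^\star)^H$ — is precisely that parallel, now with a $\theta$-dependent denominator that can no longer be dropped. Your identification of the unit-power data assumption as the only approximation invoked matches the paper's own treatment in \eqref{exp_echoes_co2}.
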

	\begin{proof}
		The proof is similar to that in Appendix \ref{append_mle}. Thus, it is omitted here due to space limitation.
	\end{proof} 

	Note that the MLE needs to know the transmitted signals $\mathbf{X}_2$ during Phase II, which is not available due to random data signals. However, there are alternative estimation methods available that do not rely on the knowledge of  transmitted signals in practice. One such method is the Multiple Signal Classification (MUSIC) algorithm, which can approach MLE performance in high SNR scenarios without the need for information about the transmitted signals \cite{17564}. 
	
	Next, we consider the CRB of angle estimation via the MLE. We can adopt the MIE employed in Section \ref{sec_mse_phaseI} to predict the MSE. However, due to the existence of the denominator term in \eqref{exp_mse_II}, the chi-square distribution cannot be adopted for the MLE. Consequently, the method utilized in Theorem 3 to calculate the probability of the ``no outlier" event cannot be applied in this scenario. Hence, we only explore the CRB (instead fo MSE) of angle estimation.
	For notational convenience, we omit $\theta_{IT}$ here. 	Let $\mathbf{U}_2 \triangleq \sqrt{N P_t} \alpha_g \mathbf{a}_s   \left[\mathbf{q}^H  \mathbf{X}, \mathbf{q}^H  \mathbf{X}_2  \right] $. We thus have
	\begin{equation} 
	\hspace{-0.1cm}	\dot{\mathbf{U}}_2 = \sqrt{N P_t} \alpha_g \dot{ \mathbf{a}_s} \left[\mathbf{q}^H \mathbf{X}, \mathbf{q}^H \mathbf{X}_2  \right]+ \sqrt{N P_t} \alpha_g \mathbf{a}_s \left[\dot{ \mathbf{q}}^H \mathbf{X}, \dot{\mathbf{q}}^H \mathbf{X}_2\right].
	\end{equation}
	According to Theorem 2,  {the CRB for angle estimation in the whole phase (i.e., by combining the echo signals in Phase I and Phase II together) can be obtained as follows},
	\begin{align} \label{crb_whole}
	\hspace{-0.1cm} \mathrm{CRB}_\text{w}  =& \frac{\sigma^2} {2|\alpha_s|^2 \left( \operatorname{tr}(\dot{\mathbf{U}}_2 \dot{\mathbf{U}}_2^H) - \frac{|\operatorname{tr}( {\mathbf{U}_2} \dot{\mathbf{U}}_2^H)|^2}{\operatorname{tr}( {\mathbf{U}}_2 {\mathbf{U}}_2^H)} \right) } \notag \\
	=& \frac{\sigma^2} {2|\alpha_s|^2} \bigg(\beta_1 M \|\dot{\mathbf{a}_s}\|_2^2 + \beta_2 |\mathbf{q}^H \boldsymbol{\phi}^\star|^2 \|\dot{\mathbf{a}_s}\|_2^2 + \beta_1 M_s \|\dot{\mathbf{q}}\|_2^2 \notag \\
	&+ \beta_2 M_s |\dot{\mathbf{q}}^H \boldsymbol{\phi}^\star|^2 - \frac{|\beta_2|^2 M_s^2 |\mathbf{q}^H \boldsymbol{\phi}^\star (\boldsymbol{\phi}^\star)^H \dot{\mathbf{q}} |^2}{\beta_1 M_s \|\mathbf{q}\|_2^2 + \beta_2 M_s  |\mathbf{q}^H \boldsymbol{\phi}^\star|^2} \bigg)^{-1} \notag  \\
	\overset{(a)}{\leq} & \frac{\sigma^2} {2|\alpha_s|^2} \frac{1}{\beta_1 M \|\dot{\mathbf{a}_s}\|_2^2 + \beta_1 M_s \|\dot{\mathbf{q}}\|_2^2+ \beta_2 |\mathbf{q}^H \boldsymbol{\phi}^\star|^2 \|\dot{\mathbf{a}_s}\|_2^2 } \notag \\
	  = &\frac{1}{\rho_t \pi^2 N } \notag \\
	  &\times  \frac{6}{ L MM_s (M^2+ M_s^2-2) + \tau_2 |\mathbf{q}^H \boldsymbol{\phi}^\star|^2 M_s(M_s^2-1)} \notag \\
	 \triangleq& \mathrm{CRB}_\text{up},
\end{align}	
	where $\mathrm{CRB}_\text{w}$ denotes the CRB for angle estimation in the whole phase,  $\beta_1 \triangleq  N P_t |\alpha_g|^2 L$, $\beta_2 \triangleq N P_t |\alpha_g|^2 \tau_2$, $(a)$ holds by ignoring the term $\beta_1 M_s \|\mathbf{q}\|_2^2$, the equality holds when $\mathbf{q} = \boldsymbol{\phi}^\star$, and $|\mathbf{q}^H \boldsymbol{\phi}^\star|^2 =  \left|\frac{\sin ({\pi M \delta_t }/{2})}{\sin ({\pi  \delta_t }/{2})} \right|^2$. 
	
	Compared with the CRB in \eqref{exp_crb_phaseI}, it is observed that  the first component of $\mathrm{CRB}_\text{up}$ (i.e., $L M M_s (M^2+ M_s^2-2)$) represents the contribution of Phase I estimation, and the second component of $\mathrm{CRB}_\text{up}$ (i.e., $\tau_2 |\mathbf{q}^H \boldsymbol{\phi}^\star|^2 M_s (M_s^2-1)$) represents the contribution of Phase  II estimation. The inequality $\mathrm{CRB}_\text{w} \leq \mathrm{CRB}_\text{up}< \mathrm{CRB}_\text{I} $ indicates that the estimation accuracy is improved by combining Phase I and Phase II together as compared to Phase I. Notably, there are distinct behaviors exhibited by the  REs in Phase II compared to Phase I. On  one hand, the  REs conduct beam scanning in Phase I, thus providing more flexibility to sense the target. Consequently, even if the number of  SEs is one (i.e., $M_s=1$), we can still estimate the target angle as in \eqref{exp_crb_phaseI}. On the other hand, the  REs fix their beamforming direction during   Phase II, instead of perusing further beam scanning. As such, if the  number of  SEs is one in Phase II, the second component of $\mathrm{CRB}_\text{up}$ becomes zero, and thus the sensing performance cannot be improved effectively.

	\subsection{Beam Splitting Based Sensing} \label{sec_beam_split}
	The single-beam-based sensing is effective only when the target locates in the vicinity of the communication user. To further enhance target estimation, we propose IRS beam splitting when the communication user's achievable rate has sufficient margin. Specifically, REs can be divided into two groups: one for communication and the other for target sensing, with different beams, respectively. However, beam splitting can potentially degrade the communication SNR and introduce interference between the two  beams. Therefore, it becomes important to determine the conditions and the portion of  REs for applying the beam splitting.

	We first examine the impact of beam splitting on data transmission. Suppose $M_e$  REs  are allocated for target angle estimation, while the remaining $M-M_e$ REs are dedicated to data transmission in Phase II. Since the initial target angle is estimated in Phase I, the $M_e$ REs are specifically designed to align with the estimated angle in order to maximize the signal power at that target. The remaining $M-M_e$ elements are set to be the last $M-M_e$ entries of the IRS's best beam to serve the communication user. For the ease of analysis, we assume that the user exactly locates at its best beam direction (i.e., $\delta_u = 0$), and the following analysis can be extended to the case where $\delta_u \neq 0$.  When all  REs are used for communication, the IRS's best beam is given by
	\begin{equation}
		\boldsymbol{\phi}^\star = e^{-j\frac{ \pi (M-1)  \overline{\theta}_{IU}} {2}} \left[\begin{array}{llll}
			1 & e^{j  \pi \overline{\theta}_{IU}}  & \ldots & e^{j  \pi (M-1) \overline{\theta}_{IU}}
		\end{array}\right]^{T}.
	\end{equation}
	Then, we have $|\mathbf{a}_r^H (\overline{\theta}_{IU}) \boldsymbol{\phi}^\star| = M$. When $M_e$ REs are allocated for target angle estimation, the reflection coefficients of  REs are given by
	\begin{align} \label{irs_partition}
		\boldsymbol{\phi}_e = \; & e^{-j\frac{ \pi (M-1)  \overline{\theta}_{IU}} {2}} \Big[ 
		1 , e^{j  \pi \overline{\theta}_{IT}} , \ldots , e^{j  \pi (M_e-1)  \overline{\theta}_{IT}}, \notag \\
		& e^{j  \pi M_e  \overline{\theta}_{IU}} , \ldots , e^{j  \pi (M-1)  \overline{\theta}_{IU}}
		\Big]^{T}.
	\end{align}	
	Then, the IRS beamforming gain for the communication user can be obtained as
	\begin{equation} \label{g_irs}
		G_\text{IRS} \triangleq |\mathbf{a}_r^H (\overline{\theta}_{IU}) \boldsymbol{\phi}_e| = \left| M-M_e + \varrho \frac{\sin \left(\frac{\pi M_e \delta_{UT}}{2}\right)} {\sin \left(\frac{\pi \delta_{UT}}{2}\right)} \right|,
	\end{equation}		
	where $\delta_{UT} = |\theta_{IU} - \theta_{IT}|$ denotes the spatial direction difference between the user and target, and $\varrho = \exp\left(j \pi \delta_{UT}\frac{(M_e-1)}{2} \right)$. Due to the existence of  $\varrho$ and $\sin(\pi M_e \delta_{UT}/2)$,   $G_\text{IRS}$ may be smaller than $M-M_e$, which means that the $M_e$ REs allocated for target estimation may negatively impact the communication rate performance. Thus, it is crucial to determine the condition under which  $G_\text{IRS}$ does not deviate much from $M-M_e$. Then, we have
		\begin{align}
			G_\text{IRS}^2 =& (M-M_e)^2 +  \frac{\sin^2 \left(\frac{\pi M_e \delta_{UT}}{2}\right)} {\sin^2 \left(\frac{\pi \delta_{UT}}{2}\right)} \notag \\
			&+ 2 (M-M_e) \frac{\sin \left(\frac{\pi M_e \delta_{UT}}{2}\right) \cos\left(\frac{\pi (M_e-1) \delta_{UT}}{2}\right)} {\sin \left(\frac{\pi \delta_{UT}}{2}\right)} \notag \\
			\overset{(a)}{\approx} & (M-M_e)^2 +  \frac{\sin^2 \left(\frac{\pi M_e \delta_{UT}}{2}\right)} {\sin^2 \left(\frac{\pi \delta_{UT}}{2}\right)} \notag \\
			&+ \frac{2(M-M_e)}{\pi \delta_{UT}} \sin(\pi M_e \delta_{UT}),
		\end{align} 
	where $(a)$ holds due to $2\sin \left(\frac{\pi M_e \delta_{UT}}{2}\right) \cos\left(\frac{\pi (M_e-1) \delta_{UT}}{2}\right) \approx \sin(\pi M_e \delta_{UT})$ for $M_e \gg 1$, and $\sin \left(\frac{\pi \delta_{UT}}{2}\right) \approx \frac{\pi \delta_{UT}}{2}$ for $\delta_{UT} \ll 1$. We find that the effect on $G_\text{IRS}$ mainly stems from $\sin(\pi M_e \delta_{UT})$. When $\sin(\pi M_e \delta_{UT})= -1$, i.e., $M_e \delta_{UT}= \frac{3}{2}+ 2k, k=0,1,2,\ldots$,   $G_\text{IRS}$ can be much smaller than $M-M_e$, and thus degrade the communication SNR dramatically. It is also observed that when $M_e \delta_{UT} > 2$, the value of $\frac{\sin \left({\pi M_e \delta_{UT}}/{2}\right)} {\sin \left({\pi \delta_{UT}}/{2}\right)}$ in \eqref{g_irs} will be very small. Consequently, the threshold of $\delta_{UT}$ is selected as $\frac{11}{2M_e}$. When $\delta_{UT} > \frac{11}{2M_e}$, the effect of beam splitting on the communication can be neglected. Note that the effect of beam splitting on the target is similar to \eqref{g_irs}. Therefore, the threshold of $\delta_{UT}$ is set to $\frac{11}{M}$. When $\delta_{UT} > \frac{11}{M}$,  it can be regarded that the $M_e$  REs and the other $M-M_e$  REs cause only negligible interference to each other. Overall, the condition for choosing beam splitting is when the target angle falls in the region $\{ \Omega_e| \theta_{IT}: |\theta_{IT} - \theta_{BI} - \eta_\ell| > \frac{11}{M}, \theta_{IT} \notin \Omega_u\}$ \footnote{ Note that the above analysis assumes that the user exactly locates at the optimal beam direction, i.e., $\delta_u=0$. When $\delta_u\neq 0$, we have $\delta_{UT} \in \{|\overline{\theta}_{IT} - \eta_\ell|- \delta_u, |\overline{\theta}_{IT} - \eta_\ell|+ \delta_u\}$. In the worse case, we have $\delta_{UT}= |\overline{\theta}_{IT} - \eta_\ell|- \frac{1}{L}> \frac{10}{M}$, where the user and target are also well separated. This is the reason why we do not choose $\delta_{UT}$  as $\frac{7}{2M_e}$, which is not sufficient in general to ensure the user and target to be well separated.}. Fig. \ref{g_irs1} illustrates $G_\text{IRS}$ versus $\delta_{UT}$ when  $M_e$ is set to   different values. It is found that $\frac{11}{M}$ is a good threshold. When $\delta_{UT}>\frac{11}{M}$,   $G_\text{IRS}$ fluctuates only slightly around $M-M_e$. 
	
\begin{figure}[t]
	\begin{centering}
		\includegraphics[width=.42\textwidth]{./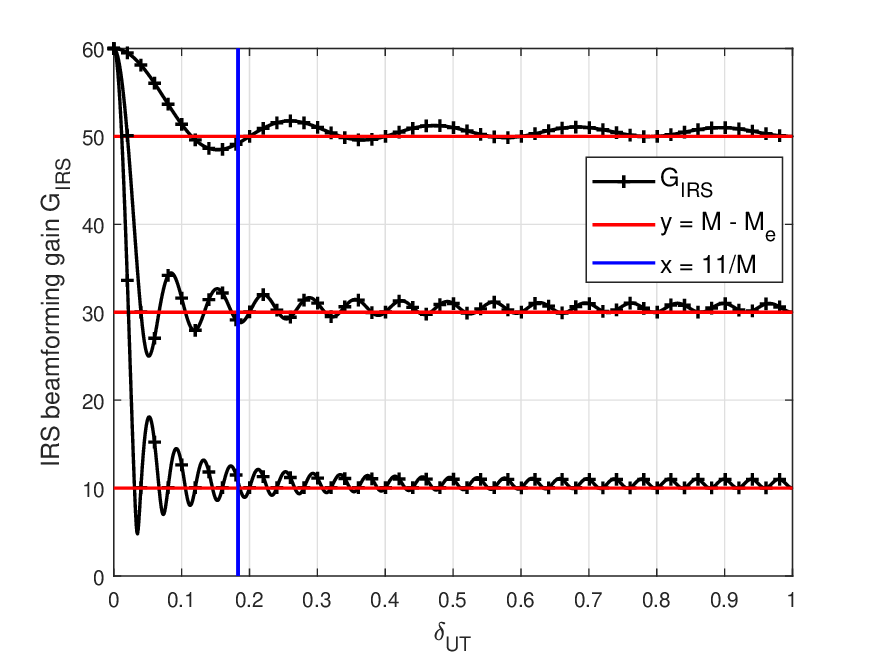}
		\caption{IRS beamforming gain $G_\text{IRS}$ versus $\delta_{UT}$ when $M=60$.}\label{g_irs1}
	\end{centering}
	\vspace{-0.3cm}
\end{figure}

	
	
	Next, we determine the element number for beam splitting. The user feeds the IRS's best beam index $\ell$ and corresponding received signal power back to the IRS controller where the corresponding user's SNR $\gamma_\ell$ can be obtained. If the SNR $\gamma_\ell$ of the communication user is larger than the predefined target $\gamma$, the beam splitting can be performed in the region $\Omega_e$. However, the exact angle of the user is unavailable with only the knowledge of the best beam $\eta_\ell$. Thus, we consider the worst case to decide the element number for beam splitting. Specifically, the user's SNR is given by
		\begin{equation}
		\gamma_\ell = \frac{N P_t |\alpha_g|^2 |\alpha_h|^2}{\sigma^2} \frac{\sin^2 (\frac{\pi M \delta_u  }{2  })}{\sin^2 (\frac{\pi \delta_u  }{2  })} \triangleq G_\text{ch} \frac{\sin^2 (\frac{\pi M \delta_u  }{2  })}{\sin^2 (\frac{\pi \delta_u  }{2  })},
	\end{equation}	
	where $G_\text{ch}\triangleq \frac{N P_t |\alpha_g|^2 |\alpha_h|^2}{\sigma^2}$ is the unknown channel gain. Since the expression ${\sin^2 (\frac{\pi M \delta_u  }{2  })}/{\sin^2 (\frac{\pi \delta_u  }{2  })}$ decreases as $\delta_u$ increases within the interval $[0,1/L]$, we first assume that the user exactly locates at the optimal beam direction (i.e., $\delta_u=0$) to obtain the worst channel gain $G_\text{ch,min} = \frac{\gamma_\ell}{M^2}$ under a given $\gamma_\ell$. Then, we set $\delta_u$ in \eqref{exp_gammae} to $1/L$ to obtain the worst IRS beamforming gain.  Therefore, when $M_e$ REs are split for target sensing, the worst channel gain multiplies the worst IRS beamforming gain, leading to the worst SNR $\gamma_e$ of the user as follows,
	\begin{equation} \label{exp_gammae}
		\gamma_e = G_\text{ch,min} \frac{\sin^2 (\frac{\pi (M-M_e) \delta_u }{2 })}{\sin^2 (\frac{\pi \delta_u }{2  })} = \frac{\gamma_\ell}{M^2} \frac{\sin^2 (\frac{\pi (M-M_e)  }{2 L })}{\sin^2 (\frac{\pi  }{2 L  })} \geq \gamma.
	\end{equation}	
	Therefore, the element number for target sensing is given by
	\begin{subequations}
		\begin{align} 
			M_e & \leq \left\lfloor M- \frac{2L}{\pi} \arcsin \left(M \sqrt{\frac{\gamma}{\gamma_\ell}} \sin \left(\frac{\pi}{2L}\right)\right) \right\rfloor \label{num_split} \\
			& \overset{(a)}{\approx} \left\lfloor M\left(1- \sqrt{\frac{\gamma}{\gamma_\ell}}\right) \right\rfloor ,
		\end{align}	
	\end{subequations}
	where $(a)$ holds due to $\sin(x)\approx x$ and $\arcsin(x)\approx x$ for $x \ll 1$. In order to improve the sensing accuracy, $M_e$ can be set as the right-hand side of \eqref{num_split}. It is noted that when the user actually locates at the optimal beam direction, we can obtain $M_e = \left\lfloor M\left(1- \sqrt{\frac{\gamma}{\gamma_\ell}}\right) \right\rfloor$. 
	
	Finally, we consider the performance of target estimation after beam splitting. Following the similar procedure in Section \ref{sec_no_split}, we can obtain the corresponding CRB of angle estimation. Consequently, the CRB is similar to that in \eqref{crb_whole}, with $\mathbf{q}^H \boldsymbol{\phi}^\star$ replaced by $\mathbf{q}^H \boldsymbol{\phi}_e$.
	
		\begin{figure*}[t] 
	\begin{minipage}[t]{0.45\linewidth} 
		\centering
		\includegraphics[width=\textwidth]{./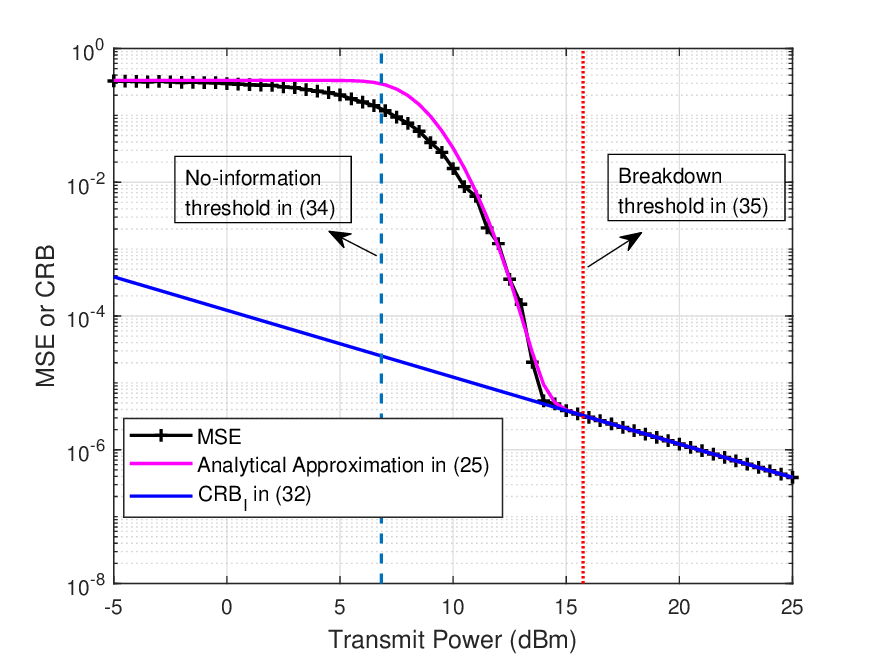}
		\caption{MSE versus transmit power in Phase I when $M=64$ and $M_s=12$.}\label{fig_mseI_1}
	\end{minipage}%
	\hspace{0.9cm}
	\begin{minipage}[t]{0.45\linewidth}
		\centering
		\includegraphics[width=\textwidth]{./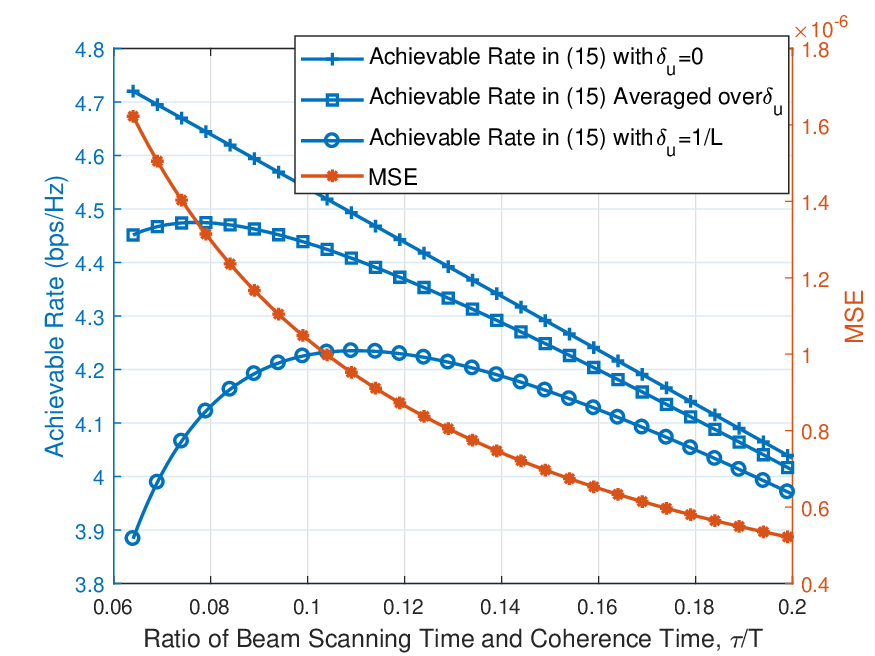}
		\caption{Achievable rate and MSE versus ratio of beam scanning time and coherence time with $P_t=20$ dBm.}\label{fig_online_m}
	\end{minipage}
	\vspace{-0.3cm}
\end{figure*}	
	\section{Simulation Results}
	In this section, numerical examples are provided to validate our analysis and evaluate the performance of our proposed protocol. The carrier frequency is $f_c=28$ GHz. Other system parameters are set as follows unless specified otherwise later: $N=64$, $M=64$, $M_s=12$, $L=M$, $P_t=30$ dBm, $T=1000$, $d_{BI}=30$ m, $d_{IU}=10$ m, $d_{IT}=5$ m, $\zeta_{BI} = -60^\circ$, $\zeta_{IU} = 0^\circ$, $\zeta_{IT}=30^\circ$, $\sigma^2 = -120$ dBm, and $\kappa = 7$ dBsm. The curves of MSE are obtained by averaging over 1000 independent realizations of the noise.

	\subsection{Communication and Sensing Trade-off in Phase I}	
	\vspace{-0.2cm}
	We first consider the MSE performance of MLE in Phase I. 	Fig. \ref{fig_mseI_1} illustrates a comparison of the analytical approximation of MSE in \eqref{exp_mse_ori} and the actual MSE with respect  to the transmit power when $M=64$ and $M_s=12$. There are some interesting findings. Firstly, the derived analytical approximation closely matches the actual MSE in both low-SNR and high-SNR regimes. When the transmit power falls in the threshold region, the derived analytical approximation slightly overestimates the actual MSE. Overall, the derived analytical approximation performs well.  Secondly, Corollary \ref{corollary_1} can well predict the no-information threshold and breakdown threshold. Thus, in order to ensure reliable angle estimation in Phase I, the transmit power should exceed the breakdown threshold. 
	
	{Next, we investigate the communication and sensing trade-off in Phase I.  Fig. \ref{fig_online_m} shows the achievable rate and MSE versus the ratio of beam scanning time and coherence time, when all  REs are adopted for communication during data transmission period. The curve labeled ``Achievable Rate in (15) Averaged  over $\delta_u$" is generated by taking the expectation of \eqref{exp_sweeping} over $\delta_u$, where $\delta_u \sim \mathcal{U}(0,\frac{1}{L})$. Since exhausting beam scanning is adopted, we have $\tau=L$ as in Section \ref{sec_rate_phaseI}. The achievable rate and MSE exhibit different variations versus the IRS beam scanning time. As the time of beam scanning  increases, the IRS beamforming gain \eqref{exp_bs_irs} in the case of $\delta_u=\frac{1}{L}$ increases at the expense of reduced data transmission time, leading to an initial increase and subsequent decrease in the achievable rate. In the case of $\delta_u=0$, the IRS beamforming gain already reaches its peak when $L=M$. Thus, increasing the beam scanning time further only leads to a reduction in the achievable rate. However, when considering the average effect, the achievable rate initially rises and subsequently declines. On the other hand, the MSE monotonically decreases  as the time of beam scanning increases. Thus, a proper beam scanning time that balances achievable rate and MSE is desired. However, with our proposed enhanced sensing approaches in Section \ref{sec_target_enhancement}, we can first ensure the communication quality in Phase I, and then further improve sensing accuracy in Phase II.}
	
	\subsection{Communication and Sensing Trade-off in Phase II}
	We first consider the MSE performance in Phase II as depicted in Fig. \ref{fig_mseII_1}, where the target angle ($\zeta_{IT}=0^\circ$) is close to the communication user's angle. Several interesting findings are inferred from the results. Firstly, the concise form of $\mathrm{CRB}_{\text{up}}$ closely approximates $\mathrm{CRB}_{\text{w}}$, making it convenient for analysis. Secondly, when the transmit power is below 5 dBm, the curve of $\mathrm{CRB}_{\text{up}}$ is nearly the same as that of $\mathrm{CRB}_{\text{I}}$. This occurs because the MSE falls within the no-information region, resulting in significant estimation errors in Phase I and $|\mathbf{q}^H \boldsymbol{\phi}^\star|^2 \approx 0$. As a result, the sensing accuracy cannot be improved in Phase II. Thirdly, the no-information threshold and breakdown threshold in Phase II closely resemble those in Phase I since the  REs allocated for sensing in Phase II point their beam towards the angle estimated in Phase I. Consequently, if the angle estimation in Phase I is inaccurate, the  REs for sensing cannot point their beam towards the target effectively in Phase II, thus resulting in limited performance improvement.  
	
	\begin{figure*}[t] 
		\begin{minipage}[t]{0.45\linewidth} 
			\centering
			\includegraphics[width=\textwidth]{./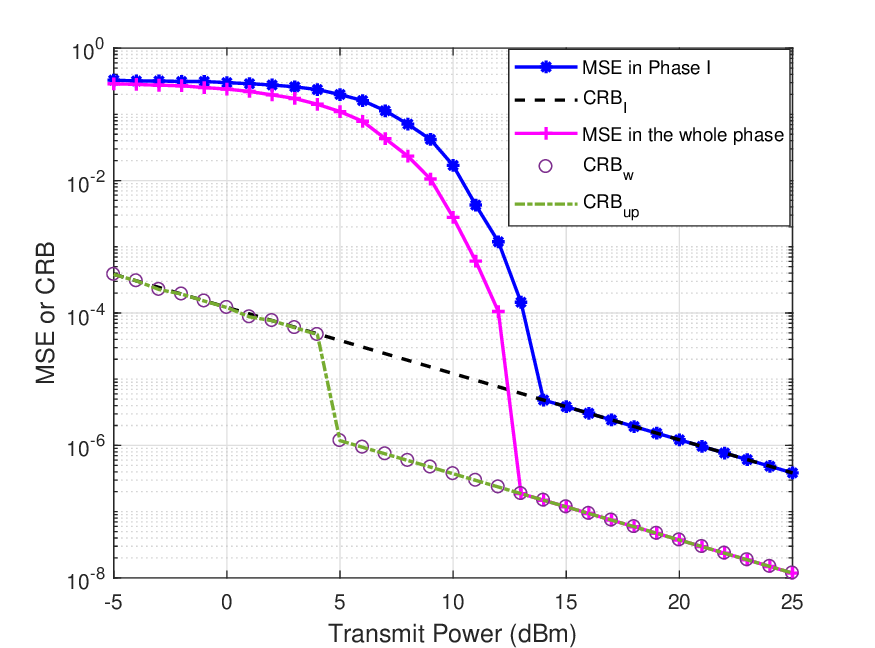}
			\caption{MSE versus transmit power in Phase II when $\zeta_{IU}=0^\circ$ and $\zeta_{IT}=0^\circ$.}\label{fig_mseII_1}
		\end{minipage}%
		\hspace{0.9cm}
		\begin{minipage}[t]{0.45\linewidth}
			\centering
			\includegraphics[width=\textwidth]{./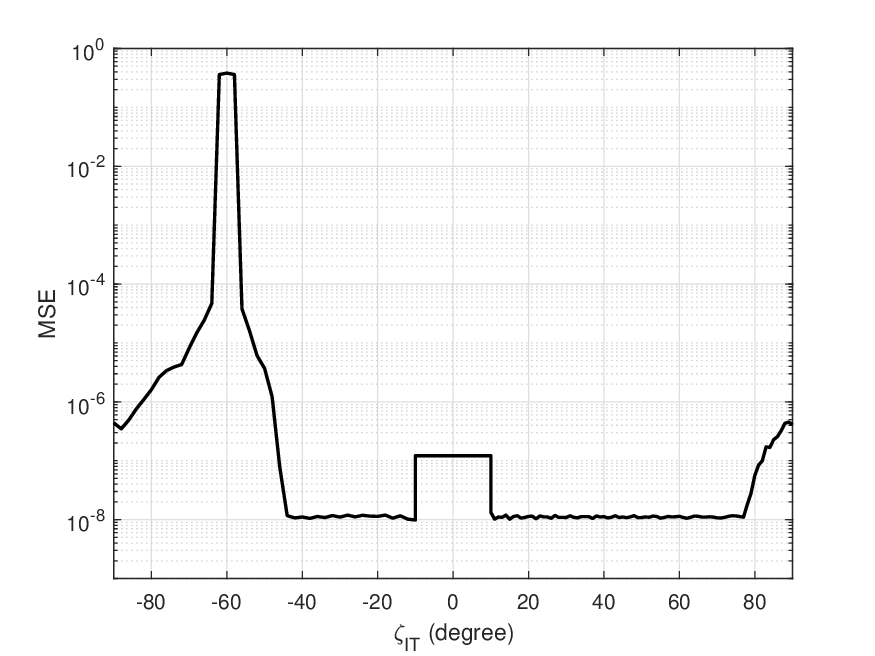}
			\caption{MSE versus the target angle in beam-splitting-based sensing when $M=64$ and $M_e=36$.}\label{re_strategy2}
		\end{minipage}
	\vspace{-0.3cm}
	\end{figure*}
		\begin{figure*}[t] 
		\begin{minipage}[t]{0.45\linewidth} 
			\centering
			\includegraphics[width=\textwidth]{./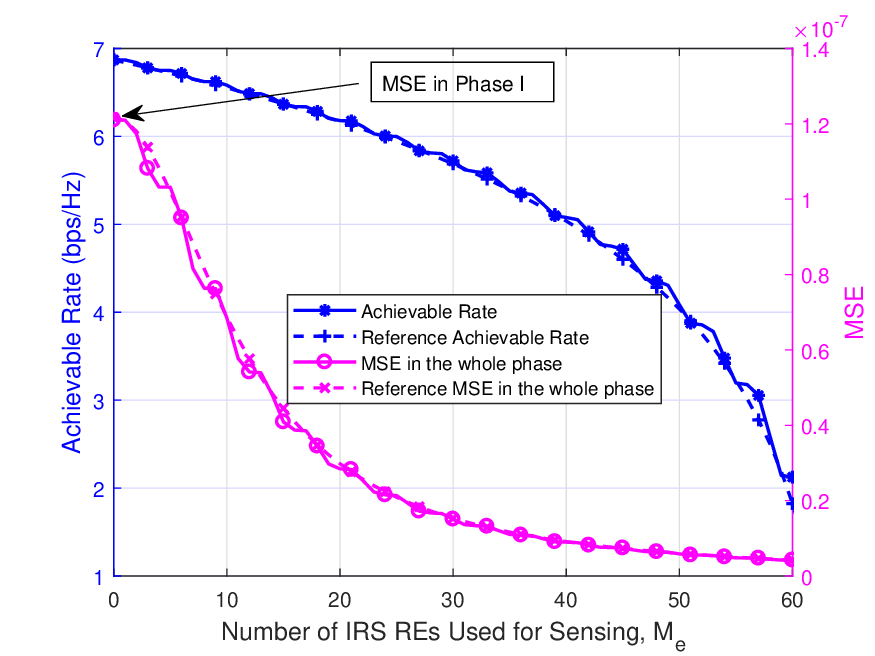}
			\caption{Achievable rate and MSE versus the number of  REs used for sensing  when $P_t = 30$ dBm, $\zeta_{IU}=0^\circ$ and $\zeta_{IT}=30^\circ$.}\label{rate_crb_Nr_re1}
		\end{minipage}%
		\hspace{0.9cm}
		\begin{minipage}[t]{0.45\linewidth}
			\centering
			\includegraphics[width=\textwidth]{./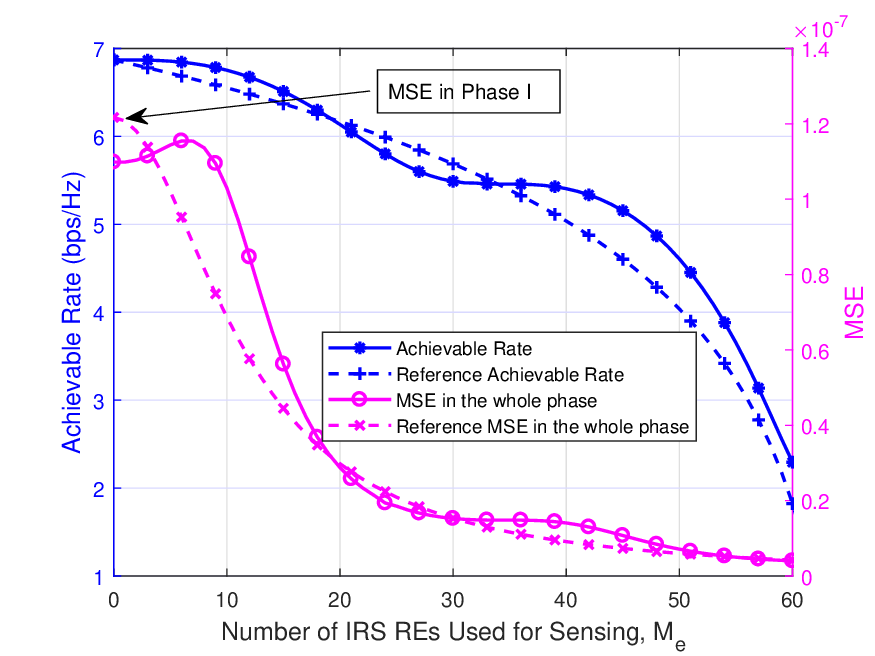}
			\caption{Achievable rate and MSE versus the number of  REs used for sensing  when $P_t = 30$ dBm, $\zeta_{IU}=0^\circ$ and $\zeta_{IT}=3^\circ$.}\label{rate_crb_Nr_re2}
		\end{minipage}
	\vspace{-0.3cm}
	\end{figure*}
	
	Next, the influence of the undetectable region on the sensing performance is illustrated in Fig. \ref{re_strategy2}, which depicts  the MSE versus the target angle in beam-splitting-based sensing when $M=64$ and $M_e=36$. It is observed that the MSE is high when $\theta_{IT}<-44^\circ$ or $\theta_{IT}> 75^\circ$, primarily due to the presence of the undetectable region. 
	Specifically, the undetectable region is defined as $\{\Omega_u | -90^\circ< \zeta_{IT}< -44.4^\circ \; \text{or} \; 75.3^\circ< \zeta_{IT}< 90^\circ\}$, where $\frac{180}{\pi} \arcsin\left(\sin\left(\frac{-60\pi}{180}\right)+ \frac{2}{12}\right) \approx -44.4^\circ$ and $\frac{180}{\pi} \arcsin\left(\sin\left(\frac{-60\pi}{180}\right)- \frac{2}{12}+2\right) \approx 75.3^\circ$. In addition, the MSE increases for $|\zeta_{IT}|<10^\circ$ due to the beam splitting only conducted at $\arcsin(11/M) = 9.90^\circ$. To better present the MSE performance, we focus on the region $\{\zeta_{IT}|-40^\circ \leq \zeta_{IT} \leq 70^\circ\}$ in the subsequent investigation.

	We compare the achievable rate and MSE versus the number of  REs used for sensing in Fig. \ref{rate_crb_Nr_re1} and Fig. \ref{rate_crb_Nr_re2} where  $\zeta_{IT}$ is set to $30^\circ$ and $3^\circ$, respectively. The curve labeled ``Reference Achievable Rate" represents the achievable rate obtained when $M-M_e$ REs are dedicated to communication without any interference. The curve labeled ``Reference MSE in the whole phase" represents the MSE obtained when $M_e$ REs are dedicated to target sensing without any interference in Phase II.  {The leftmost point on the curve of ``Reference MSE in the whole phase" represents the MSE in Phase I.}   It is firstly observed that  the achievable rate shows a decreasing trend, while the estimation accuracy  exhibits an increasing trend as the number of  REs allocated for sensing increases. When the user and target are well separated (Fig. \ref{rate_crb_Nr_re1}), the achievable rate and MSE remain relatively close to their references, respectively. It is worth noting that the slight fluctuation in the curve labeled ``Achievable Rate" is due to the minimal impact of the REs split for sensing on the communication performance. This observation is consistent with the results shown in Fig. \ref{g_irs1}, where slight fluctuation occurs when $\delta_{UT}>\frac{11}{M}$. 
	However, when the user and target are close (Fig. \ref{rate_crb_Nr_re2}), beam splitting results in a significant drop in the achievable rate compared to its reference around $M_e=25$. The MSE in the whole phase fluctuates more severely around the reference MSE, but it is always smaller than the MSE in Phase I. Therefore, beam-splitting-based sensing can only be adopted when the user and target are well separated to ensure the communication quality, as discussed in Section \ref{sec_beam_split}. 
	
	\begin{figure*}[t] 
		\begin{minipage}[t]{0.45\linewidth} 
			\centering
			\includegraphics[width=\textwidth]{./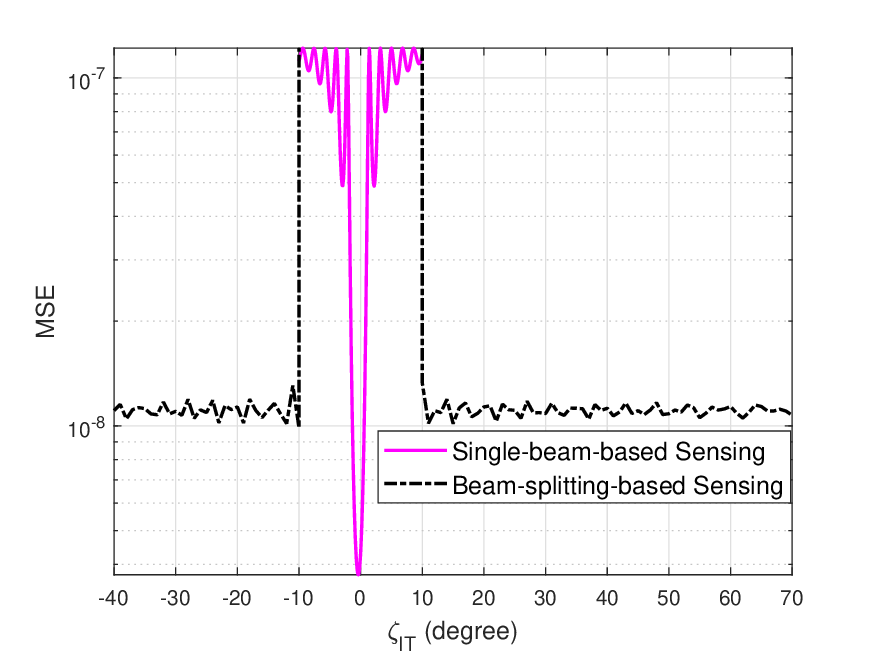}
			\caption{MSE versus the target angle in single-beam-based sensing and beam-splitting-based sensing when $P_t = 30$ dBm.}\label{re_strategy12_mse}
		\end{minipage}%
		\hspace{0.9cm}
		\begin{minipage}[t]{0.45\linewidth}
			\centering
			\includegraphics[width=\textwidth]{./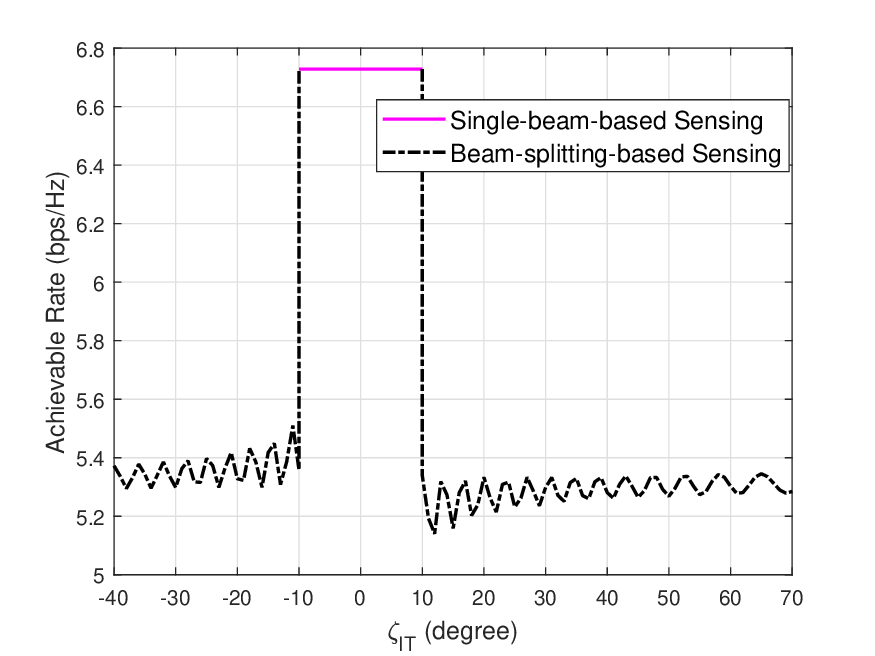}
			\caption{Achievable rate versus the target angle in single-beam-based sensing and beam-splitting-based sensing when $P_t = 30$ dBm.}\label{re_strategy12_rate}
		\end{minipage}
	\vspace{-0.3cm}
	\end{figure*}
	
	Next, we proceed to evaluate the MSE versus the target angle in the single-beam-based sensing and beam-splitting-based sensing as displayed in Fig. \ref{re_strategy12_mse} when the transmit power is $P_t=30$ dBm. The corresponding achievable rate is shown in Fig. \ref{re_strategy12_rate}, with a threshold achievable rate of the communication user set at 5.0 bps/Hz. The number of  REs split for target sensing is determined as $M_e = 36$ according to \eqref{num_split}. Several interesting findings are observed.  Firstly, in Fig. \ref{re_strategy12_mse}, the MSE decreases significantly around $\zeta_{IT}=0^\circ$ in the single-beam-based sensing. This behavior can be attributed to the region $\{\Omega_t \big||\zeta_{IT}|< \arcsin(2/M)\approx 1.80^\circ\}$. Secondly, the MSE in the single-beam-based sensing also decreases for $|\zeta_{IT}|<10^\circ$ due to the sidelobes of the function $\frac{\sin ({\pi M x }/{2})}{\sin ({\pi x }/{2})}$. However, the MSE exhibits little improvement since the target is outside the region $\Omega_t$, and the signals cannot effectively reach the target. Thirdly, in the beam-splitting-based sensing, the MSE decreases for $|\zeta_{IT}|>10^\circ$ due to the beam splitting conducted at $\arcsin(11/M) \approx 9.90^\circ$. However, since only a portion of  REs is allocated for target sensing during Phase II, the sensing performance in the beam-splitting-based sensing is inferior to that achieved at the region $\Omega_t$ in the single-beam-based sensing.
	
	As for the achievable rates shown in Fig. \ref{re_strategy12_rate}, the single-beam-based sensing maintains a constant achievable rate since all  REs reflect the signals towards the user's direction in Phase II. Conversely, the achievable rate in the beam-splitting-based sensing decreases from 6.7 bps/Hz to around 5.3 bps/Hz when $|\zeta_{IT}|>10^\circ$. Nevertheless, this rate remains above the 5.0 bps/Hz threshold.  Consequently, the single-beam-based sensing benefits the target sensing when the angles of the user and target are close to each other. On the other hand, the beam-splitting-based sensing improves the sensing accuracy at the cost of reducing the achievable rate of the communication user, which is advantageous when the user and   target are well separated in angle and the achievable rate of the user has sufficient margin.
	
	\begin{figure*}[t] 
		\begin{minipage}[t]{0.44\linewidth} 
			\centering
			\includegraphics[width=\textwidth]{./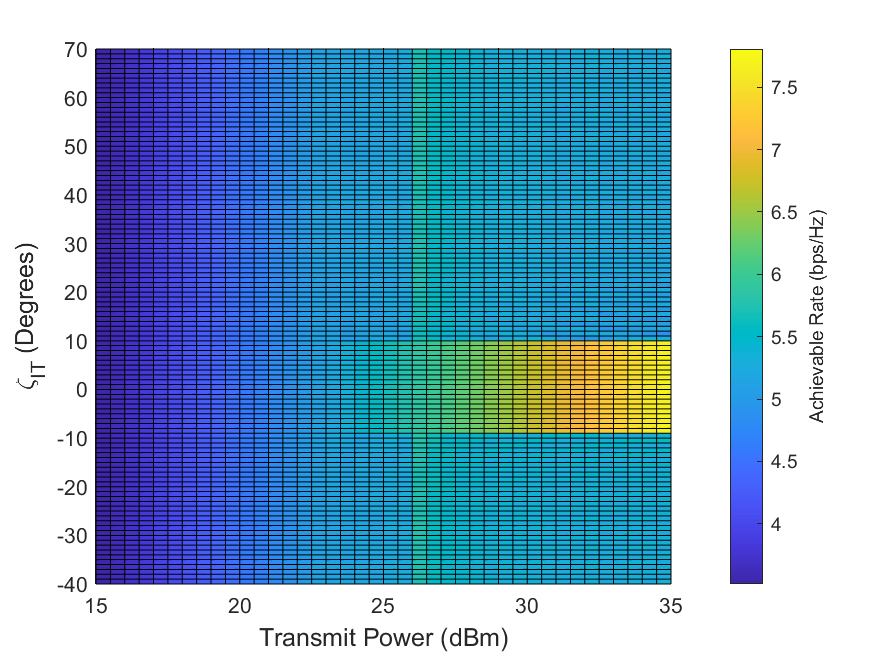}
			\caption{Achievable rate versus the target angle and transmit power.}\label{strategy3_pt_angle2}
		\end{minipage}%
		\hspace{1cm}
		\begin{minipage}[t]{0.44\linewidth}
			\centering
			\includegraphics[width=\textwidth]{./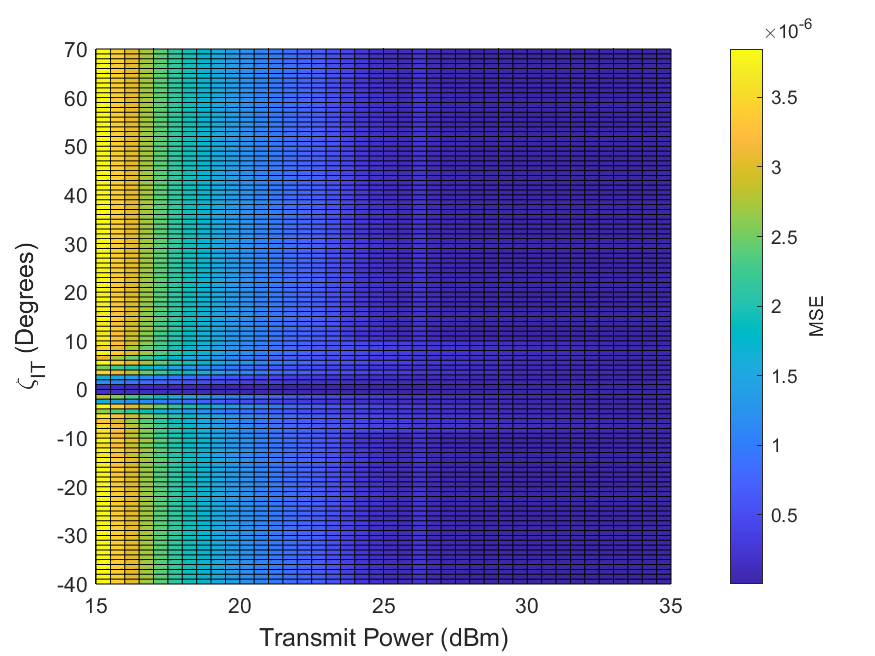}
			\caption{MSE versus the target angle and transmit power.}\label{strategy3_pt_angle2_mse}
		\end{minipage}
	\vspace{-0.6cm}
	\end{figure*}
	
	Furthermore, we explore the achievable rate and MSE  with respect to transmit power and target angle in Fig. \ref{strategy3_pt_angle2} and Fig. \ref{strategy3_pt_angle2_mse}, respectively. It is noteworthy that when the transmit power is below 22 dBm, the achievable rate remains below 5 bps/Hz, indicating insufficient rate margin for target sensing during Phase II by beam splitting based sensing. Hence, the sensing performance can only be improved in the region $\Omega_t$ with single-beam based sensing. As the transmit power increases, beam splitting can be employed, resulting in further enhanced sensing performance in the region $\Omega_e$.

	\section{Conclusion}
	In this paper, we have proposed a new ISAC protocol for an IRS-aided mmWave system  that utilizes downlink  beam scanning/data signals for achieving simultaneous beam training and target sensing. We derive the achievable rate of the communication user 	and the CRB/MSE of the target angle estimation in the beam scanning and data transmission phases, respectively. In particular, two IRS beam design and sensing strategies, namely, single-beam-based sensing and beam-splitting-based sensing, are proposed to enhance the sensing accuracy during the data transmission phase while ensuring the communication quality. Numerical results have verified the effectiveness of the proposed protocol and design.
	
	\appendix
	\begin{appendices}
		\subsection{Proof of Theorem 1} \label{append_mle}
		Denoting $\operatorname{vec}(\mathbf{Y})$ as $\widetilde{\mathbf{y}}$, the likelihood function of $\operatorname{vec}(\mathbf{Y})$ given $\boldsymbol{\xi}$ is
		\begin{small}
			\begin{equation} 
				L(\widetilde{\mathbf{y}}; \boldsymbol{\xi}) = \frac{1}{\left(\pi \sigma^2\right)^{L M_s}} \exp \left(-\frac{1}{\sigma^2} \|\widetilde{\mathbf{y}} - \alpha_s \operatorname{vec}(\mathbf{U}(\theta)) \|^2\right).
			\end{equation}
		\end{small}
	
		Then, maximizing the likelihood function is equivalent to minimizing $\|\widetilde{\mathbf{y}} - \alpha_s \operatorname{vec}(\mathbf{U}(\theta)) \|^2$. Therefore, the MLE can be written as
		\begin{equation} \label{exp_mle}
			\begin{aligned}
				(\theta_{\text{MLE}}, \alpha_{\text{MLE}}) = \arg \min \limits_{\theta, \alpha} \|\widetilde{\mathbf{y}} - \alpha_s \operatorname{vec}(\mathbf{U}(\theta)) \|^2.
			\end{aligned}
		\end{equation}
		With any given $\theta$, the optimal $\alpha$ is given by $\alpha_{\text{MLE}} = \frac{(\operatorname{vec}(\mathbf{U}(\theta)))^H \widetilde{\mathbf{y}}}{\|\operatorname{vec}(\mathbf{U}(\theta))\|^2}$.
		By substituting $\alpha_{\text{MLE}}$ back into \eqref{exp_mle}, yielding
		\begin{small}
			\begin{align}\label{exp_mle_theta}
				\hspace{-0.1cm}\|\widetilde{\mathbf{y}} - \alpha_{\text{MLE}} \operatorname{vec}(\mathbf{U}(\theta)) \|^2 
				& = \|\widetilde{\mathbf{y}}\|^2 - \frac{\left|(\operatorname{vec}(\mathbf{U}(\theta)))^H \operatorname{vec}(\mathbf{Y})\right|^2} {\|\operatorname{vec}(\mathbf{U}(\theta))\|^2} \notag \\
				& =  \|\widetilde{\mathbf{y}}\|^2 - \frac{\left|\mathbf{a}_s^H (\theta) \mathbf{Y} \mathbf{X}^H \mathbf{q}(\theta) \right|^2} {L M M_s}.
			\end{align}
		\end{small}
		Thereby, the MLE of $\theta$ is given by \eqref{est_mle} .
		
		\subsection{Proof of Theorem 2} \label{append_crb}
Since $\frac{\partial \overline{\mathbf{U}}}{\partial \theta}  = \alpha_s \operatorname{vec} (\dot{\mathbf{u}}(\theta) )$ and $\frac{\partial \overline{\mathbf{U}}}{\partial \overline{\boldsymbol{\alpha}}} = [1,j]\otimes \operatorname{vec} ({\mathbf{u}(\theta)})$, we have
\begin{small}
	\begin{equation}
		\mathbf{F}_{\theta\theta} = \frac{2}{\sigma^2} \operatorname{Re}\left\{\alpha_s \left(\operatorname{vec} (\dot{\mathbf{U}} ) \right)^H \alpha_s \operatorname{vec} (\dot{\mathbf{U}}  ) \right\}
		= \frac{2 |\alpha_s|^2}{\sigma^2} \operatorname{tr}  \left(\dot{\mathbf{U}}  \dot{\mathbf{U}}^H \right).
	\end{equation}
\end{small}
\begin{small}
	\vspace{-0.5cm}
\begin{align}
	\mathbf{F}_{\theta \overline{\boldsymbol{\alpha}}} &= \frac{2}{\sigma^2} \operatorname{Re}\left\{\alpha_s^*  \operatorname{vec} (\dot{\mathbf{U}}^H)     [1,j]\otimes \operatorname{vec} ({\mathbf{U}}  ) \right\}	\notag		 \\
	&= \frac{2}{\sigma^2} \operatorname{Re}\left\{ \alpha_s^* [1,j]\otimes \left(\operatorname{vec} (\dot{\mathbf{U}}^H  ) \operatorname{vec} ({\mathbf{U}}  ) \right) \right\} \notag \\
	& = \frac{2}{\sigma^2} \operatorname{Re}\left\{ \alpha_s^*  \operatorname{tr} \left(\mathbf{U} \dot{\mathbf{U}}^H \right)    [1,j] \right\}.
\end{align}
\end{small}
\begin{small}
				\begin{align}
		\mathbf{F}_{\overline{\boldsymbol{\alpha}} \overline{\boldsymbol{\alpha}}} 
		&= \frac{2}{\sigma^2} \operatorname{Re}\left\{\left([1,j]\otimes \operatorname{vec} ({\mathbf{U}} ) \right)^H [1,j]\otimes \operatorname{vec} ({\mathbf{U}}  ) \right\}		\notag	 \\
		&= \frac{2}{\sigma^2} \operatorname{Re}\left\{ \left( [1,j]^H[1,j] \right) \otimes \left(\operatorname{vec}(\mathbf{U}^H) \operatorname{vec}(\mathbf{U})\right) \right\} \notag \\
		& = \frac{2}{\sigma^2} \operatorname{Re}\left\{ \left( [1,j]^H[1,j] \right) \otimes \left(\operatorname{tr} (\mathbf{U}^H \mathbf{U})\right) \right\} \notag \\
		& = \frac{2}{\sigma^2} \operatorname{tr}\left(\mathbf{U}   \mathbf{U}^H\right) \mathbf{I}_2.
	\end{align}
\end{small}

		Thus, the FIM can be obtained as in  \eqref{exp_crb_ori}.	
		
		\subsection{Proof of Theorem 3} \label{append_p_no_outlier}
		Based on the MLE in Theorem 1, we have 
\begin{small}
	\begin{align}
		I(\theta-\theta_0) \triangleq & \left|\mathbf{a}_s^H (\theta) \mathbf{Y} \mathbf{X}^H \mathbf{q}(\theta) \right|^2 \notag \\
		= & \Big|\mathbf{a}_s^H (\theta) \big(\sqrt{N P_t} \alpha_g \alpha_s \mathbf{a}_s(\theta_0) \mathbf{q}(\theta_0)^H \mathbf{X} + \mathbf{ N}\big) \mathbf{X}^H \mathbf{q}(\theta) \Big|^2 \notag \\
		=& \Big|  \sqrt{N P_t} L \alpha_g \alpha_s \mathbf{a}_s^H (\theta) \mathbf{a}_s(\theta_0) \mathbf{q}(\theta_0)^H \mathbf{q}(\theta)  \notag \\
		&+  \mathbf{a}_s^H (\theta) \mathbf{ N}  \mathbf{X}^H \mathbf{q}(\theta) \Big|^2 \notag \\
		\triangleq & \left|  \sqrt{N P_t} L \alpha_g \alpha_s f_{M_s}(\theta-\theta_0) f_{M}(\theta-\theta_0)  + w \right|^2, 
	\end{align}
\end{small}where $\theta_0$ is the actual angle to be estimated, $f_{M}(\theta) \triangleq \frac{\sin(\pi M\theta/2)}{\sin(\pi \theta/2)}$, and $w \triangleq \mathbf{a}_s^H (\theta) \mathbf{ N}  \mathbf{X}^H \mathbf{q}(\theta) \sim \mathcal{CN}(0,L M M_s \sigma^2)$. Therefore, $\frac{2}{L M M_s \sigma^2} I(\theta-\theta_0)$ is a non-central chi-square distributed random variable with two degrees of freedom with non-centrality parameter given by
\begin{small}
				\begin{align}
		\Upsilon &= \frac{2N P_t L^2 |\alpha_g|^2 |\alpha_s|^2}{L M M_s \sigma^2} f_{M_s}^2(\theta-\theta_0) f_{M}^2(\theta-\theta_0) \notag \\
		&= 2L \rho_t N  M M_s \frac{f_{M_s}^2(\theta-\theta_0)}{M_s^2} \frac{f_{M}^2(\theta-\theta_0)}{M^2}.
	\end{align}
\end{small}

		Therefore, the periodogram sampled at discrete point $\{\frac{2k}{M_s} \big| k=0,1,\ldots, M_s-1\}$ and $\{\frac{2i}{M} \big| i=0,1,\ldots, M-1\}$ is distributed according to
\begin{small}
	\begin{equation}
	\hspace{-0.2cm}	\frac{2 I\left(\theta-\theta_0\right)}{L M M_s \sigma^2} \sim\left\{\begin{array}{cl}
			\chi_2^2(2L \rho_t N M M_s), & \theta-\theta_0=0, \\
			\chi_2^2, & \theta-\theta_0=\frac{2}{M_s} k, k\neq 0, \\
			\chi_2^2, & \theta-\theta_0=\frac{2}{M} i, i\neq 0,
		\end{array}\right.
	\end{equation}
\end{small}where $\chi_2^2$ and $\chi_2^2(\cdot)$ represent the central and non-central  chi-square distributions with two degrees of freedoms, respectively. Defining $\widetilde{I}(k) \triangleq \frac{2 I\left(\frac{2k}{M_s}\right)}{L M M_s \sigma^2}$, we have $\widetilde{I}(k)$ is $\chi_2^2$ for $k\neq 0$ and $\chi_2^2(2L \rho_t N M M_s)$ for $k=0$. Then, the cdf of  $\widetilde{I}(k)$ is given by
\begin{small}
			\begin{equation}
		F_{\widetilde{I}(k)} (x) = \left\{\begin{array}{cc}
			1-Q_1(\sqrt{2L \rho_t N M M_s }, \sqrt{x}), & k =0 , \\
			1-\exp(-x/2), & k\neq 0,
		\end{array}\right.
	\end{equation}
\end{small}where $Q_1(\alpha,\beta)$ denotes the first-order Marcum-Q function with parameter $\alpha$ and $\beta$. Similarly, defining $\overline{I}(i) \triangleq \frac{2 I\left(\frac{2i}{M}\right)}{L M M_s \sigma^2}$, we have $\overline{I}(i)$ is $\chi_2^2$ for $i\neq 0$ and $\chi_2^2(2L \rho_t N M M_s)$ for $i=0$. Then, the cdf of  $\overline{I}(i)$ is given by
\begin{small}
			\begin{equation}
		F_{\overline{I}(i)} (y) = \left\{\begin{array}{cc}
			1-Q_1(\sqrt{2L \rho_t N M M_s }, \sqrt{y}), & i =0 , \\
			1-\exp(-y/2), & i\neq 0.
		\end{array}\right.
	\end{equation}
\end{small}

		Thereby, the probability of the event of ``no outlier" can be expressed as
\begin{small}
			\begin{equation} \label{exp_pro_apen}
		\begin{aligned}
			\hspace{-0.1cm}	p &=\Pr\left\{\widetilde{I}(0)>\max\left(\widetilde{I}(k)\right)\right\} \Pr\left\{\overline{I}(0)>\max\left(\overline{I}(i)\right)\right\}  \\
			&=\prod\limits_{k=1}^{M_s-1} \Pr\left\{\widetilde I(0)>\widetilde I(k)\right\} \prod\limits_{i=1}^{M-1} \Pr\left\{\overline I(0)>\overline I(i)\right\},k\neq0, i\neq 0.
		\end{aligned}
	\end{equation}
\end{small}

		We consider the first part of the above probability. Since $\widetilde{I}(k), k\neq 0$, are i.i.d. random variables,  the first part of the probability can be simplified to
\begin{small}
				\begin{align}
		p_1 &= \left(\Pr\left\{\widetilde I(0)>\widetilde I(k)\right\} \right)^{M_s-1} \notag \\
		&=\left[\int_0^\infty p_{\widetilde{I}(k)}(x)\left(1-F_{\widetilde{I}(0)}(x)\right)dx\right]^{M_s-1} , \; k\neq 0,
	\end{align}
\end{small}where $p_{\widetilde{I}(k)}(x)$ is the pdf of the exponentially distribution. Then, we have
\begin{small}
				\begin{align}
		p_1 &=\left(\int_{0}^{\infty}\frac{1}{2}e^{-x/2}Q_{1}\left(\sqrt{2 L \rho_t N M M_s},\sqrt{x}\right)d x\right)^{M_s-1} \notag \\
		&=\left(\int_0^\infty x e^{-x^2/2}Q_1\left(\sqrt{2 L \rho_t N M M_s},x\right)dx\right)^{M_s-1}  \notag \\
		&= \left(1-\frac{1}{2}\exp\left(-\frac{L \rho_t N M M_s}{2}\right)\right)^{M_s-1}.
	\end{align}
\end{small}Since the second part of  \eqref{exp_pro_apen} is similar to the first part, the probability of the event of ``no outlier" can be approximated by
\begin{small}
			\begin{equation}
		p \approx \left(1-\frac{1}{2}\exp\left(-\frac{L \rho_t N M M_s}{2}\right)\right)^{M_s+M-2}.
	\end{equation}	
\end{small}

	\end{appendices}
	
	\bibliographystyle{IEEEtran}
	\bibliography{ris_split}

\begin{thebibliography}{10}
\providecommand{\url}[1]{#1}
\csname url@samestyle\endcsname
\providecommand{\newblock}{\relax}
\providecommand{\bibinfo}[2]{#2}
\providecommand{\BIBentrySTDinterwordspacing}{\spaceskip=0pt\relax}
\providecommand{\BIBentryALTinterwordstretchfactor}{4}
\providecommand{\BIBentryALTinterwordspacing}{\spaceskip=\fontdimen2\font plus
\BIBentryALTinterwordstretchfactor\fontdimen3\font minus
  \fontdimen4\font\relax}
\providecommand{\BIBforeignlanguage}[2]{{%
\expandafter\ifx\csname l@#1\endcsname\relax
\typeout{** WARNING: IEEEtran.bst: No hyphenation pattern has been}%
\typeout{** loaded for the language `#1'. Using the pattern for}%
\typeout{** the default language instead.}%
\else
\language=\csname l@#1\endcsname
\fi
#2}}
\providecommand{\BIBdecl}{\relax}
\BIBdecl

\bibitem{renwang}
R.~Li, X.~Shao, S.~Sun, M.~Tao, and R.~Zhang, ``Beam scanning for integrated
  sensing and communication in {IRS}-aided {mmWave} systems,'' in \emph{Proc.
  IEEE SPAWC}, Sep. 2023.

\bibitem{9705498}
A.~Liu, Z.~Huang, M.~Li, Y.~Wan, W.~Li, T.~X. Han, C.~Liu, R.~Du, D.~K.~P. Tan,
  J.~Lu, Y.~Shen, F.~Colone, and K.~Chetty, ``A survey on fundamental limits of
  integrated sensing and communication,'' \emph{IEEE Commun. Surveys Tuts.},
  vol.~24, no.~2, pp. 994--1034, Feb. 2022.

\bibitem{9737357}
F.~Liu, Y.~Cui, C.~Masouros, J.~Xu, T.~X. Han, Y.~C. Eldar, and S.~Buzzi,
  ``Integrated sensing and communications: Toward dual-functional wireless
  networks for {6G} and beyond,'' \emph{IEEE J. Sel. Areas Commun.}, vol.~40,
  no.~6, pp. 1728--1767, Jun. 2022.

\bibitem{8910627}
Q.~Wu and R.~Zhang, ``Towards smart and reconfigurable environment: Intelligent
  reflecting surface aided wireless network,'' \emph{IEEE Commun. Mag.},
  vol.~58, no.~1, pp. 106--112, Jan. 2020.

\bibitem{9326394}
Q.~Wu, S.~Zhang, B.~Zheng, C.~You, and R.~Zhang, ``Intelligent reflecting
  surface-aided wireless communications: A tutorial,'' \emph{IEEE Trans.
  Commun.}, vol.~69, no.~5, pp. 3313--3351, May 2021.

\bibitem{IRS2}
X.~Shao, L.~Cheng, X.~Chen, C.~Huang, and D.~W.~K. Ng, ``{Reconfigurable
  intelligent surface-aided {6G} massive access: Coupled tensor modeling and
  sparse Bayesian learning},'' \emph{IEEE Trans. Wireless Commun.}, vol.~21,
  no.~12, pp. 10\,145--10\,161, Dec. 2022.

\bibitem{9364358}
Z.-M. Jiang, M.~Rihan, P.~Zhang, L.~Huang \emph{et~al.}, ``Intelligent
  reflecting surface aided dual-function radar and communication system,''
  \emph{IEEE Syst. J.}, vol.~16, no.~1, pp. 475--486, Mar. 2022.

\bibitem{9771801}
X.~Song, D.~Zhao, H.~Hua, T.~X. Han, X.~Yang, and J.~Xu, ``Joint transmit and
  reflective beamforming for {IRS}-assisted integrated sensing and
  communication,'' in \emph{Proc. IEEE WCNC}, Apr. 2022, pp. 189--194.

\bibitem{9416177}
X.~Wang, Z.~Fei, Z.~Zheng, and J.~Guo, ``Joint waveform design and passive
  beamforming for {RIS}-assisted dual-functional radar-communication system,''
  \emph{IEEE Trans. Veh. Technol.}, vol.~70, no.~5, pp. 5131--5136, May 2021.

\bibitem{song2022intelligent}
X.~Song, J.~Xu, F.~Liu, T.~X. Han, and Y.~C. Eldar, ``Intelligent reflecting
  surface enabled sensing: Cram\'er-rao bound optimization,'' \emph{IEEE Trans.
  Signal Process.}, vol.~71, pp. 2011--2026, 2023.

\bibitem{9769997}
R.~Liu, M.~Li, Y.~Liu, Q.~Wu, and Q.~Liu, ``Joint transmit waveform and passive
  beamforming design for {RIS}-aided {DFRC} systems,'' \emph{IEEE J. Sel.
  Topics Signal Process.}, vol.~16, no.~5, pp. 995--1010, May 2022.

\bibitem{9729741}
Y.~He, Y.~Cai, H.~Mao, and G.~Yu, ``{RIS}-assisted communication radar
  coexistence: Joint beamforming design and analysis,'' \emph{IEEE J. Sel.
  Areas Commun.}, vol.~40, no.~7, pp. 2131--2145, Jul. 2022.

\bibitem{10038557}
X.~Cao, X.~Hu, and M.~Peng, ``Feedback-based beam training for intelligent
  reflecting surface aided {mmWave} integrated sensing and communication,''
  \emph{IEEE Trans. Veh. Technol.}, vol.~72, no.~6, pp. 7584--7596, Jun. 2023.

\bibitem{9593143}
R.~S. Prasobh~Sankar, B.~Deepak, and S.~P. Chepuri, ``Joint communication and
  radar sensing with reconfigurable intelligent surfaces,'' in \emph{Proc. IEEE
  SPAWC}, Sep. 2021, pp. 471--475.

\bibitem{9838563}
S.~Chen, Z.~Xiao, and Y.~Zeng, ``Simultaneous beam sweeping for multi-beam
  integrated sensing and communication,'' in \emph{Proc. IEEE ICC}, May 2022,
  pp. 4438--4443.

\bibitem{8458146}
M.~Giordani, M.~Polese, A.~Roy, D.~Castor, and M.~Zorzi, ``A tutorial on beam
  management for {3GPP} {NR} at {mmWave} frequencies,'' \emph{IEEE Commun.
  Surveys Tuts.}, vol.~21, no.~1, pp. 173--196, Jan. 2019.

\bibitem{9724202}
X.~Shao, C.~You, W.~Ma, X.~Chen, and R.~Zhang, ``Target sensing with
  intelligent reflecting surface: Architecture and performance,'' \emph{IEEE J.
  Sel. Areas Commun.}, vol.~40, no.~7, pp. 2070--2084, Jul. 2022.

\bibitem{Semi-IRS1}
X.~Shao and R.~Zhang, ``Enhancing wireless sensing via a target-mounted
  intelligent reflecting surface,'' \emph{National Science Review}, vol.~10,
  no.~8, p. nwad150, 2023.

\bibitem{Semi-IRS2}
\BIBentryALTinterwordspacing
------, ``Target-mounted intelligent reflecting surface for secure wireless
  sensing,'' 2023. [Online]. Available:
  \url{https://arxiv.org/pdf/2308.02676.pdf}
\BIBentrySTDinterwordspacing

\bibitem{9367457}
L.~Liu and S.~Zhang, ``A two-stage radar sensing approach based on {MIMO-OFDM}
  technology,'' in \emph{Proc. IEEE Globecom Workshops}, Dec. 2020, pp. 1--6.

\bibitem{6831723}
J.~Chen, ``When does asymptotic orthogonality exist for very large arrays?'' in
  \emph{Proc. IEEE GLOBECOM}, Dec. 2013, pp. 4146--4150.

\bibitem{1624646}
C.~Richmond, ``Mean-squared error and threshold snr prediction of
  maximum-likelihood signal parameter estimation with estimated colored noise
  covariances,'' \emph{IEEE Trans. Inf. Theory}, vol.~52, no.~5, pp.
  2146--2164, May 2006.

\bibitem{1168170}
A.~Steinhardt and C.~Bretherton, ``Thresholds in frequency estimation,'' in
  \emph{Proc. 1985 IEEE Int. Conf. Acoustics, Speech, and Signal Processing},
  vol.~10, Mar. 1985, pp. 1273--1276.

\bibitem{1344467}
P.~Forster, P.~Larzabal, and E.~Boyer, ``Threshold performance analysis of
  maximum likelihood {DOA} estimation,'' \emph{IEEE Trans. Signal Process.},
  vol.~52, no.~11, pp. 3183--3191, Nov. 2004.

\bibitem{1408188}
F.~Athley, ``Threshold region performance of maximum likelihood direction of
  arrival estimators,'' \emph{IEEE Trans. Signal Process.}, vol.~53, no.~4, pp.
  1359--1373, Apr. 2005.

\bibitem{8681437}
F.~Filippini, F.~Colone, and A.~De~Maio, ``Threshold region performance of
  multicarrier maximum likelihood direction of arrival estimator,'' \emph{IEEE
  Trans. Aerosp. Electron. Syst.}, vol.~55, no.~6, pp. 3517--3530, Jun. 2019.

\bibitem{kay1993fundamentals}
S.~M. Kay, \emph{Fundamentals of Statistical Signal Processing: Estimation
  Theory}.\hskip 1em plus 0.5em minus 0.4em\relax Prentice-Hall Englewood
  Cliffs, NJ, 1993.

\bibitem{17564}
P.~Stoica and A.~Nehorai, ``{MUSIC}, maximum likelihood, and cramer-rao
  bound,'' \emph{IEEE Trans. Acoust., Speech, Signal Process.}, vol.~37, no.~5,
  pp. 720--741, 1989.

\end{thebibliography}
	
\end{document}